\newcolumntype{C}[1]{>{\centering\arraybackslash}m{#1}}
\newcommand{\be}{\begin{equation}}
\newcommand{\ee}{\end{equation}}
\newcommand{\ba}{\begin{eqnarray}}
\newcommand{\ea}{\end{eqnarray}}
\newtheorem{definition}{Definition}
\newtheorem{proposition}{Proposition}
\newtheorem{observation}{Observation}
\newtheorem{Lemma}{Lemma}
\newtheorem{remark}{Remark}
\def\>{\rangle}
\def\<{\langle}
\begin{document}

\title{Overcoming Traditional No-Go Theorems: Quantum Advantage in Multiple Access Channels}

\author{Ananya Chakraborty}
\affiliation{Department of Physics of Complex Systems, S. N. Bose National Center for Basic Sciences, Block JD, Sector III, Salt Lake, Kolkata 700106, India.}

\author{Sahil Gopalkrishna Naik}
\affiliation{Department of Physics of Complex Systems, S. N. Bose National Center for Basic Sciences, Block JD, Sector III, Salt Lake, Kolkata 700106, India.}

\author{Edwin Peter Lobo}
\affiliation{Laboratoire d’Information Quantique, Universite libre de Bruxelles (ULB), Av. F. D. Roosevelt 50, 1050 Bruxelles, Belgium.}

\author{Ram Krishna Patra}
\affiliation{Department of Physics of Complex Systems, S. N. Bose National Center for Basic Sciences, Block JD, Sector III, Salt Lake, Kolkata 700106, India.}

\author{Samrat Sen}
\affiliation{Department of Physics of Complex Systems, S. N. Bose National Center for Basic Sciences, Block JD, Sector III, Salt Lake, Kolkata 700106, India.}

\author{Mir Alimuddin}
\affiliation{Department of Physics of Complex Systems, S. N. Bose National Center for Basic Sciences, Block JD, Sector III, Salt Lake, Kolkata 700106, India.}

\author{Amit Mukherjee}
\affiliation{Indian Institute of Technology  Jodhpur, Jodhpur 342030, India.}

\author{Manik Banik}
\affiliation{Department of Physics of Complex Systems, S. N. Bose National Center for Basic Sciences, Block JD, Sector III, Salt Lake, Kolkata 700106, India.}

\begin{abstract}
Extension of point-to-point communication model to the realm of multi-node configurations finds a plethora of applications in internet and telecommunication networks. Here, we establish a novel advantage of quantum communication in a commonly encountered network configuration known as the Multiple Access Channel (MAC). A MAC consists of multiple distant senders aiming to send their respective messages to a common receiver. Unlike the quantum superdense coding protocol, the advantage reported here is realized without invoking entanglement between the senders and the receiver. Notably, such an advantage is unattainable in traditional point-to-point communication involving one sender and one receiver, where the limitations imposed by the Holevo and Frankel  Weiner no-go theorems come into play. Within the MAC setup, this distinctive advantage materializes through the receiver's unique ability to simultaneously decode the quantum systems received from multiple senders. Intriguingly, some of our MAC designs draw inspiration from various other constructs in quantum foundations, such as the Pusey-Barrett-Rudolph theorem and the concept of `nonlocality without entanglement', originally explored for entirely different purposes. Beyond its immediate applications in network communication, the presented quantum advantage hints at a profound connection with the concept of `quantum nonlocality without inputs' and holds the potential for semi-device-independent certification of entangled measurements.
\end{abstract}



\maketitle
\onecolumngrid
\section{Introduction}
The elementary model of communication, originally formulated in Shannon's 1948 seminal work \cite{Shannon1948}, deals with reliable transmission of information between two distant servers. Its quantum version -- the Quantum Shannon theory -- aims to devise advanced methods for information transmission by harnessing the non-classical properties of quantum systems \cite{Wilde2011}. For instance, the groundbreaking quantum superdense coding protocol efficiently employs the intriguing concept of quantum entanglement to transmit two bits of classical message by communicating a two-label quantum system -- a qubit \cite{Bennett1992}. Quantum advantages, however, come along with their own limitations. Holevo's theorem is one such no-go result that limits the capacity of a quantum system to be same as its classical counterpart when no entanglement is allowed between the sender and the receiver \cite{Holevo1973}. Recently, Frenkel and Weiner further generalized this result by demonstrating that in assistance with only pre-shared classical correlation ({\it i.e.} without entanglement) any input/output correlation achievable with an $n$-level quantum system can also be obtained using an $n$-state classical system \cite{Frenkel2015}. This in turn renders the `signaling dimension' of quantum and classical systems to be identical \cite{DallArno2017} (also see \cite{Naik2022,Sen2022,Patra2023}).

Here we report a novel advantage of quantum communication within the network setup which generally involves several distant parties exchanging information among themselves \cite{Gamel2011}. Importantly, the advantage is possible without invoking any pre-shared entanglement between the senders and the receivers. Specifically, we show that communicating a quantum system can be advantageous over its classical counterpart in a widely used network setup called the Multiple Access Channel (MAC) \cite{Ahlswede1971,Liao1972,Biglieri2007,Yard2008}. A MAC consists of multiple distant senders intending to transmit their individual messages to a single receiver, such as uplink from multiple distant mobile phones to a common server (see Fig.\ref{fig1}). In simulation of a given MAC, the aim is to replicate its action using the least possible communication from the senders to the receiver. Interestingly, we showcase several instances of MAC simulation tasks achievable by communicating a qubit from each sender to the receiver, while such simulations become unfeasible if the qubit channels are replaced by classical bit (c-bit) channels. The classical channels can further be assisted with additional side resources, such as classical correlations shared among the parties, also known as shared randomness. Importantly, the reported quantum advantage persists even when the classical communication lines are supplemented with unlimited classical correlations. Notably, some of our MAC designs are inspired from well-established constructs in quantum foundations and quantum information theory, originally studied for entirely different purposes. For instance, drawing motivation from the renowned Pusey-Barrett-Rudolph theorem in quantum foundations \cite{Pusey2012}, we construct a MAC involving two senders with the desired quantum advantage. Additionally, by utilizing a concept from the seminal work on `nonlocality without entanglement' \cite{Bennett1999}, we develop a three-sender MAC which is fundamentally distinct from the earlier one. Apart from these, we also construct  few other MACs, all demonstrating quantum advantage. The origin of the quantum advantage lies in receiver's capability to jointly address the quantum systems received from various senders, which in turn allows them to implement a global decoding. This effectively surpasses the constraints imposed by the Holevo-Frenkel-Weiner no-go results applicable to one-sender-one-receiver communication setup \cite{Holevo1973,Frenkel2015}. While qubit communication leads to the desired advantages without invoking pre-shared entanglement between the senders and receiver, we show that quantum entanglement has nontrivial usages here too. In particular, the reported advantage can be attained with c-bit channels from the senders to the receiver provided they are augmented with pre-shared entanglement. Our findings, coupled with their foundational underpinnings, indicate the potential for various other quantum benefits in network communication and offer the prospect of semi-device-independent certification of non-product measurements which often finds applications in a variety of quantum protocols.

{\it Multiple Access Channel.--} Mathematically, a MAC can be represented as a stochastic map from the senders' input alphabet sets to the receiver's output alphabet set. For instance, a MAC, with $K$ distant senders $\{S_i\}_{i=1}^K$ and one receiver $R$, is described by the stochastic matrix $\mathcal{N}^{K}\equiv\{p(a|x_1,\cdots,x_K)~|~a\in A,~x_i\in X_i\}$, where $X_i$ is the message/input set of the $i^\text{th}$ sender $S_i$, $A$ is the output set of the receiver $R$, and $p(a|x_1,\cdots,x_K)$ denotes the probability of obtaining the outcome $a\in A$ given the inputs $x_i\in X_i$; clearly $\sum_{a\in A} p(a|x_1,\cdots,x_K)=1,~\forall~\Vec{x}\in\times_iX_i$. Consider a scenario where each of the senders can communicate only $1$-bit of classical information to the receiver for simulating a given MAC. Without the assistance of any kind of local or shared randomness, the parties can employ only classical deterministic strategies.
\begin{definition}\label{def1}
A classical deterministic strategy with 1-bit communication from each of the senders to the receiver is an ordered tuple $\left(\mathrm{E}_1,\cdots,\mathrm{E}_K,\mathrm{D}\right)\in\times_{i=1}^K\mathcal{E}_i\times\mathcal{D}$, where $E_i$ denotes a deterministic encoding of $i^{th}$ party's message set $X_i$ into $1$-bit, {\it i.e.}, $\mathrm{E}_i:X_i\mapsto\{0,1\}$ for $i\in\{1,\cdots,K\}$, and $D$ denotes a deterministic function from $K$-bit communication string into the output set $A$, {\it i.e.}, $\mathrm{D}:\{0,1\}^{\times K}\mapsto A$.
\end{definition}
Calligraphic letters in Definition \ref{def1} symbolize sets of all possible deterministic encodings and decodings for the respective parties. Number of such strategies are finite whenever $X_i$'s and $A$ are of finite cardinalities, and collection of such strategies will be denoted as $\mathbf{C}_{ds}^{K}$. Each party, randomizing their respective deterministic strategies locally, can implement a classical local strategy which is an ordered tuple  $\left(P(\mathcal{E}_1),\cdots,P(\mathcal{E}_K),P(\mathcal{D})\right)$ of probability distributions; $P(\mathcal{E}_i)$'s are distributions over encoding functions and $P(\mathcal{D})$ over decoding functions. The set of all such strategies, denoted as $\mathbf{C}_{ls}^{K}$, forms a non-convex set. Shared randomness allowed among the parties empowers them to employ  correlated classical strategies. 
\begin{definition}\label{def2}
A correlated classical strategy utilizes classical shared randomness and is a probability distribution $P(\mathcal{E}_1\times\cdots\times\mathcal{E}_K\times\mathcal{D})$ over the deterministic strategies. \end{definition}
For finite input-output cases, collection of all such strategies, denoted as $\mathbf{C}_{cs}^{K}$, forms a polytope embedded in some $\mathbb{R}^n$; the exact value of $n$ depends on the cardinalities of the input and output sets. The extreme points of this polytope are the deterministic strategies. In actuality, different physical configurations arise depending on what kind of shared randomness (SR) is available. In Definition \ref{def2}, SR is allowed in all possible subgroups of the parties, and the SR resource in this configuration will be denoted as $\$_{G}$. On the other hand, $\cup_{i=1}^K\$_{RS_i}$ denotes a configuration where the receiver shares classical correlation with each of the senders independently (see Fig.\ref{fig1}). If not mentioned otherwise, it will be assumed that unlimited amount of SR is allowed among the parties in an indicated subgroup. Strategies with SR allowed among only some subgroups of parties form non-convex sets lying strictly in between $\mathbf{C}_{ls}^{K}$ and $\mathbf{C}_{cs}^{K}$. With qubit communication the parties can employ a quantum deterministic strategy.
\begin{figure}[t!]
\centering
\includegraphics[scale=0.55]{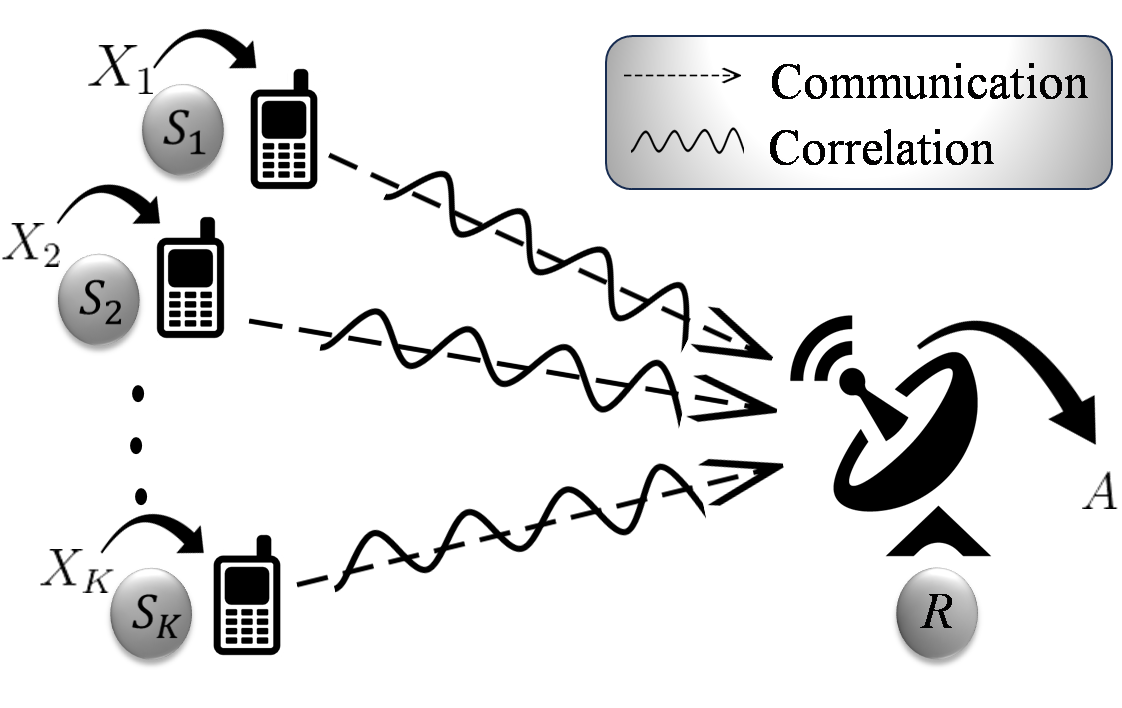}
\caption{Multiple access channel with $K$ distant senders and one receiver, $\mathcal{N}^{K}\equiv\{p(a|x_1,\cdots,x_K)~|~a\in A,~x_i\in X_i\}$. In the configuration, depicted here, each of the senders share correlation with the receiver, but otherwise neither any communication line nor any correlation is available among the senders \cite{self1}.}
\label{fig1}\vspace{-.5cm}
\end{figure}
\begin{definition}\label{def3}
A quantum deterministic strategy with 1-qubit communication from each of the senders to the receiver is an ordered tuple $\left(\mathrm{E}^q_1,\cdots,\mathrm{E}^q_K,\mathrm{D}^q\right)$, where, $\mathrm{E}^q_i:x_i\mapsto\ket{\psi^{x_i}}_{S_i}\in\mathbb{C}^2_{S_i},~\forall~x_i\in X_i$ is an encoding strategy for the $i^{th}$ sender and $\mathrm{D}^q\equiv\left\{\Pi_l\in\mathbb{P}(\otimes_{i=1}^K\mathbb{C}^2_{S_i})~s.t.~\sum_{l=1}^{|A|}\Pi_l=\mathbf{I}\right\}$ is a $|A|$ outcome positive operator valued measure (POVM) performed by the receivers for decoding.
\end{definition}
$\mathbb{P}(\mathcal{H})$ in Definition \ref{def3} denotes the set of positive operators acting on the Hilbert space $\mathcal{H}$. Set of quantum deterministic strategies will be denoted as $\mathbf{Q}_{ds}^{K}$. For the case involving only one sender (with input set denoted as $X$), the Frenkel-Weiner result implies $\mathbf{Q}_{cs}^{1}=\mathbf{C}_{cs}^{1}$, for every input cardinality $|X|$ and every output cardinality $|A|$ \cite{Frenkel2015}, and thus prohibits any advantage of qubit communication over the c-bit.

\section{Quantum Advantages} Going beyond the point-to-point communication scenario, our first example deals with a MAC which involves two senders and one receiver. Each of the senders are given independent two-bit strings ${\bf x}\in\{0,1\}^{\times2}$ and ${\bf y}\in\{0,1\}^{\times2}$, respectively; and the receiver produces a two-bit output strings ${\bf a}\in\{0,1\}^{\times2}$. Quantum strategy reproducing the MAC is as follows: the senders respectively employ the encodings
\begin{align}
\left\{\!\begin{aligned}
\mathrm{E}^q_1:00\mapsto\ket{0},~01\mapsto\ket{+},~10\mapsto\ket{-},~11\mapsto\ket{1}\\
\mathrm{E}^q_2:00\mapsto\ket{0},~01\mapsto\ket{-},~10\mapsto\ket{+},~11\mapsto\ket{1}
\end{aligned}\right\};
\end{align}
where $\{\ket{0},\ket{1}\}$ is the computational basis of $\mathbb{C}^2$, and $\ket{\pm}:=(\ket{0}\pm\ket{1})/\sqrt{2}$. The Receiver, on the qubits received from the senders, performs a two-qubit maximally entangled basis measurement 
\begin{align*}
\mathcal{M}:=\left\{\ket{\psi_{ij}}\bra{\psi_{ij}};~\ket{\psi_{ij}}:=(\mathbb{I}\otimes H^{1-j}X^{\overline{i\oplus j}}Z^{i\oplus j})\ket{\phi^+}\right\} 
\end{align*}
and decodes the outcome as `$i~(\overline{i\oplus j})$' when the projector $\ket{\psi_{ij}}\bra{\psi_{ij}}$ clicks. Here, $\ket{\phi^+}:=(\ket{00}+\ket{11})/\sqrt{2}$, $H:\ket{0}~(\ket{1})\to\ket{+}~(\ket{-})$, and $X,Z$ are the Pauli gates. The resulting MAC can be compactly represented as, $\mathcal{N}^{2}_{PBR}\equiv\left\{p({\bf a}|{\bf x},{\bf y})~|~{\bf a},{\bf x},{\bf y}\in\{0,1\}^{\times2}\right\}$, where
\begin{align}
p({\bf a}|{\bf x},{\bf y})=\begin{cases}
1/2,~~\mbox{when}~~{\bf a}={\bf x}\oplus {\bf y};\\
0,~~\mbox{when}~~{\bf a}=\overline{{\bf x}\oplus {\bf y}};\\
1/4,~~\mbox{otherwise};
\end{cases}
\end{align}
${\bf a}={\bf x}\oplus {\bf y}$ denotes bit-wise XOR, {\it i.e.}, $a_i=x_i\oplus y_i$ for $i=1,2$ (explicit form of the stochastic matrix is provided in Appendix \ref{appa}). Notably, this quantum strategy draws inspiration from the renowned Pusey-Barrett-Rudolph (PBR) theorem in quantum foundations \cite{Pusey2012}, which in turn suggests the name $\mathcal{N}^{2}_{PBR}$ for the resulting MAC. By construction, $\mathcal{N}^{2}_{PBR}$ allows a simulation strategy in $\mathbf{Q}^{2}_{ds}$. We now proceed to establish an impossibility result regarding simulation of this MAC with qubit communication replaced by its classical counterpart.   
\begin{proposition}\label{prop1}
$\mathcal{N}^{2}_{PBR}$ cannot be simulated with $1$-bit communication from each sender to the receiver, even when the communication lines are augmented with the resource $\cup_{i=1}^2\$_{RS_2}$. 
\end{proposition}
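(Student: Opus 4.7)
The plan is to exploit the $16$ zero entries of $\mathcal{N}^{2\Arrow{.15cm}1}_{PBR}$ together with the product structure that $\cup_{i=1}^{2}\$_{RS_i}$ imposes on the senders' encoders. Any such classical simulation decomposes as
\[P(a|x,y)=\!\!\sum_{E_1,E_2,D}\!\! P(E_1)\,P(E_2)\,P(D|E_1,E_2)\,\mathbf{1}[D(E_1(x),E_2(y))=a],\]
with $E_i:X_i\to\{0,1\}$ and $D:\{0,1\}^{\times2}\to A$, the product marginal on $(E_1,E_2)$ reflecting the independence of the two pieces of shared randomness. Every triple in the support must satisfy $D(E_1(x),E_2(y))\neq a$ at each zero of the MAC. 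A direct evaluation of $|\langle\psi^x_{S_1}\psi^y_{S_2}|\Pi_a\rangle|^2$ shows that for each input pair exactly one output is forbidden, and that output is a fixed bijection $\phi(x\oplus y)$ of the XOR ($\phi$ is the identity on $\{01,10\}$ and the bit-complement on $\{00,11\}$).

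I then extract constraints on the encoders. If $E_1$ were constant, then for fixed $y$ the XOR $x\oplus y$ would exhaust $\{0,1\}^{\times2}$ as $x$ varies, so the single entry $D(\mathrm{const},E_2(y))$ would have to avoid all four outputs---impossible. An analogous count rules out any $1$--$3$ encoder, since then $D(1,E_2(y))$ would have to take four distinct values as $y$ varies while only two are available. Hence each $E_i$ induces a balanced $2$--$2$ partition of $\{0,1\}^{\times2}$, and such partitions are in bijection with the three one-dimensional subgroups $V$ of $(\{0,1\}^{\times2},\oplus)$. Moreover, if $V_{E_1}\neq V_{E_2}$ then $V_{E_1}+V_{E_2}=\{0,1\}^{\times2}$, so for any bits $(b_1,b_2)$ the set $\{x\oplus y:E_1(x)=b_1,\,E_2(y)=b_2\}$ fills $\{0,1\}^{\times2}$ and every output is forbidden for the single entry $D(b_1,b_2)$. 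Thus $V_{E_1}=V_{E_2}$ in every strategy in the support.

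Finally, I invoke the product structure $P(E_1,E_2)=P(E_1)\,P(E_2)$: if $P(E_1)$ placed weight on two distinct subgroup classes $V_a\neq V_b$, the product support would contain pairs with mismatched subgroups, contradicting the preceding step. So the underlying subgroup $V$ is deterministic, and the simulated MAC depends on $(x,y)$ only through the coset pair $(x+V,y+V)$. It then suffices to exhibit, for each of the three candidates for $V$, a pair of inputs with matching cosets but different MAC values: $P(11|00,00)=0\neq\tfrac12=P(11|01,00)$ eliminates $V=\{00,01\}$; $P(10|00,00)=\tfrac14\neq0=P(10|00,10)$ eliminates $V=\{00,10\}$; and $P(11|00,01)=\tfrac12\neq\tfrac14=P(11|00,10)$ eliminates $V=\{00,11\}$. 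The chief obstacle is the encoder-structure step, which hinges on organizing the forbidden-output counting via the subgroup geometry of $\{0,1\}^{\times2}$ under XOR; once this is in place, the product-marginal argument and the numerical witnesses are mechanical.
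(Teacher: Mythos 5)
Your proposal is correct, and its skeleton is the same as the paper's: extract the zero pattern of $\mathcal{N}^{2\to1}_{PBR}$, show the zeros force both encoders to be balanced two--two partitions induced by one and the same partition class (the paper's Lemma~1 and Observation~3, your subgroup-matching step), and then use the fact that $\$_{RS_1}\cup\$_{RS_2}$ makes the two encoders statistically independent to conclude that the partition class must be deterministic and common, after which no fixed class reproduces the channel. Where you differ is in execution, and the differences are improvements in rigor and economy: (i) you state explicitly the decomposition $P(a|x,y)=\sum P(E_1)P(E_2)P(D|E_1,E_2)\,\mathbf{1}[D(E_1(x),E_2(y))=a]$, i.e.\ product marginal on the encoders with an arbitrarily correlated decoder, which is precisely the operational content of $\$_{RS_1}\cup\$_{RS_2}$ that the paper's Remark~2 invokes only informally; (ii) you replace the paper's enumerations (ruling out unbalanced/mismatched encodings case by case, listing the $48$ compatible deterministic strategies and the $16$ decodings of Table~II) by the coset geometry of $(\{0,1\}^{2},\oplus)$, so the counting arguments become two-line pigeonhole statements; and (iii) your final refutation of a fixed partition class---the simulated channel must be constant on coset pairs, contradicted by explicit entries such as $P(11|00,00)=0\neq\tfrac12=P(11|01,00)$---plays the role of the paper's Remark~1, which instead checks that mixing the $16$ compatible decodings forces, e.g., $p(10|00,00)=0$. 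One remark so you are not misled when comparing with Appendix~A: your zero pattern (forbidden output equal to the XOR itself when the XOR is $01$ or $10$, and to its complement when the XOR is $00$ or $11$) is what the main-text encodings and PBR-type measurement actually produce, whereas Eq.~(2) of the appendix lists the zeros at the complement of the XOR throughout; the two tables differ only by a relabeling of the receiver's outputs, so your numerical witnesses are consistent with the channel as defined in the main text and the impossibility claim is unaffected.
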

\begin{proof}
(Outline) Note that few of the conditional probabilities in $\mathcal{N}^{2}_{PBR}\equiv\{p({\bf a}|{\bf x},{\bf y})\}$ are zero. We first identify the classical deterministic strategies that satisfy these zero conditions. As it turns out, only $48$ deterministic strategies in $\mathbf{C}_{ds}^{2}$ are compatible with these zero requirements. Then we show that any strategy obtained through convex mixing of theses $48$ deterministic strategies and reproducing $\mathcal{N}^{2}_{PBR}$ demands all the three parties to share global randomness $\$_{^G}$ among themselves. This completes the proof of our claim, with detailed calculations provided in Appendix \ref{appa}.
\end{proof}

Proposition \ref{prop1} highlights the advantage of qubit communication over the c-bit in a network communication setup. Notably, the quantum advantage is limited in a sense. Although the c-bit channels augmented with the side resource $\cup_{i=1}^2\$_{RS_i}$ cannot simulate $\mathcal{N}^{2}_{PBR}$, a classical strategy is possible if the resource $\$_G$ ({\it i.e.} global SR among the three parties) is allowed. We now introduce a class of two-sender MACs which exhibit a stronger quantum advantage -- classical strategies become impossible even with the side resource $\$_G$. The senders get inputs $x$ and $y$ from the set $\{0, \cdots, m-1\}$, while the receiver produces a binary output $a \in\{0,1\}$. The probabilities $\{p(a=0|x,y)\}$ uniquely determine the MAC since the other values are fixed by normalization. Denoting $p(a=0|x,y)$ as $p(x,y)$, the MAC $\mathcal{N}^{2}_{m}\equiv\{p(x,y)\}$ is defined as 
\begin{align}\label{poly}
&p(x,y):=\mathrm{Tr}\left[\ket{\phi^+}\bra{\phi^+} \rho_x \otimes \rho_y\right],
\end{align}
where $\rho_u= \frac{1}{2} \left(\mathbf{I}+\cos(2u/m) \sigma_Z+\sin(2u/m)\sigma_X\right)$ for $u\in\{x,y\}$. By construction, all these MACs can be simulated by a quantum strategy with qubit encoding. As the encoding states lie on the vertices of an $m$-sided polygon in the $xz$-plane of the Bloch sphere, we refer to them as the polygon-MACs. Our next result establishes a stronger quantum advantage in simulating a family of these polygon-MACs.
\begin{proposition}\label{prop2}
For $m\in\{5,\cdots,9\}$, the polygon-MACs $\mathcal{N}^{2}_{m}$ cannot be simulated using the strategies $\mathbf{C}^{2}_{cs}$.\end{proposition}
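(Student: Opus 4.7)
The plan is to recast classical simulability as a linear-programming feasibility problem and certify infeasibility for each $m\in\{5,\dots,9\}$ by exhibiting an explicit separating hyperplane.

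First, I would enumerate the vertices of the polytope $\mathbf{C}^{2\to1}_{cs}$. Because the receiver's output is binary, a deterministic strategy $(E_1,E_2,D)$ is faithfully encoded by the $m\times m$ matrix $M_{xy}=\mathbf{1}[D(E_1(x),E_2(y))=0]$, whose support is a union of at most four disjoint rectangles $A_i\times B_j$, $i,j\in\{0,1\}$, arising from the binary partitions that $E_1,E_2$ induce on $\{0,\dots,m-1\}$. Hence every vertex of $\mathbf{C}^{2\to1}_{cs}$ is such a $\{0,1\}$-rectangle pattern, and membership of $\mathcal{N}^{2\to1}_m$ reduces to the LP feasibility question of writing $p(x,y)$ as a convex combination of these patterns.

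Second, I would compute the quantum MAC explicitly. Using $\ket{\phi^+}\bra{\phi^+}=\tfrac{1}{4}(\mathbf{I}\otimes\mathbf{I}+\sigma_X\otimes\sigma_X-\sigma_Y\otimes\sigma_Y+\sigma_Z\otimes\sigma_Z)$ together with the Bloch forms of $\rho_x,\rho_y$ yields $p(x,y)=\tfrac{1}{4}(1+\cos\tfrac{2\pi(x-y)}{m})$, a circulant matrix of rank $3$. The MAC is invariant under the diagonal $\mathbb{Z}_m$ action $(x,y)\mapsto(x+k,y+k)$, which permutes the rectangle vertices by cyclically shifting the row and column partitions. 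Averaging any candidate classical decomposition over this group, I would restrict attention to the circulant (orbit-averaged) vertices and replace the LP over $m^2$ variables by one over the $m$-dimensional vector $f(d):=\tfrac{1}{4}(1+\cos\tfrac{2\pi d}{m})$, $d\in\mathbb{Z}_m$.

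Third, for each $m\in\{5,\dots,9\}$, I would solve this reduced LP. Infeasibility is certified by LP duality through coefficients $(\alpha_d)_{d\in\mathbb{Z}_m}$ and a bound $\beta$ such that $\sum_d\alpha_d\,\overline{M}_d\le\beta$ for every symmetrised vertex $\overline{M}$, yet $\sum_d\alpha_d f(d)>\beta$. The concrete witness coefficients, computed from the dual LP and expressed as exact rationals, would be tabulated in an appendix together with a direct check that the claimed inequality holds on every symmetrised vertex.

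The principal obstacle I expect is that no clean closed-form witness appears to exist: the five values of $m$ must be handled case by case, and verifying the inequality on all symmetrised vertices, though finite, requires a careful enumeration. The symmetry reduction keeps the computation small enough to present explicitly, but the core of the proof is ultimately a dual-LP certificate rather than a conceptually succinct structural statement.
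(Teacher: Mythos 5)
Your plan is essentially the paper's proof: since $\mathbf{C}^{2\to1}_{cs}$ is a polytope with the deterministic strategies as vertices, non-simulability is certified by a separating hyperplane (equivalently a dual-LP certificate), i.e.\ a linear witness $\mathbf{w}_m$ whose classical bound, computed by maximizing over the finitely many deterministic ``rectangle'' strategies, is beaten by the quantum value; your computation $p(x,y)=\tfrac14\bigl(1+\cos\tfrac{2\pi(x-y)}{m}\bigr)$ is also correct and reproduces, e.g., the value $12\sqrt2-8$ for $m=8$. Your one genuine addition, the $\mathbb{Z}_m$ symmetrization, is valid: the target is circulant, the vertex set is closed under the diagonal shift, and averaging any witness over the group can only lower its classical bound while preserving the quantum value, so circulant certificates exist whenever any do. (Interestingly, the paper's witnesses for odd $m$ are circulant, while those for $m=6,8$ are not, so your reduction would tidy those cases.) The weakness is that your write-up stops exactly where the content of the proposition lies: you never exhibit the dual certificates, the classical bounds, or the quantum violations, but only assert that an LP solver would produce them. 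For a statement whose proof \emph{is} this finite verification, that leaves a real gap; the paper closes it by listing the explicit matrices $\mathbf{w}_5,\dots,\mathbf{w}_9$, the bounds $k_m$, and the violating quantum values. To turn your proposal into a proof you must either carry out the reduced LP for each $m\in\{5,\dots,9\}$ and tabulate exact certificates, or import the paper's witnesses and verify the two inequalities (vertex bound and quantum violation) directly.
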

\begin{proof}
Analogous to Bell-type inequalities for  the space-like separated scenario, here we construct linear inequalities for each $m\in\{5,\cdots,9\}$. For a given strategy $\mathbf{p}_c\in\mathbf{C}^{2}_{cs}$ the inequality read as
\begin{equation}\label{wm}
\mathbf{w}_m[\mathbf{p}_c]:=\mathbf{w}_m\cdot\mathbf{p}_c:= \sum_{x,y} w_m^{x,y}p_c(x,y) \le k_m,
\end{equation}
where the constant $k_m$ represents the optimal classical bound obtained using c-bit communication from the senders to the receiver. The value of $k_m$ can be found by computing the value of $\mathbf{w}_m[\mathbf{p}_c]$ on all the deterministic stochastic matrices. Representing $\mathbf{w}_m\equiv(w_m^{x,y})$ as a matrix, with $w_m^{x,y}$ denoting the element of the $x^{th}$ row and $y^{th}$ column, the explicit forms  of the witness operators are provided in Appendix \ref{appb}. The quantum strategy of Eq.(\ref{poly}) yielding a higher value than $k_m$ establishes the claim of the proposition. Optimal classical bounds and the corresponding quantum violations of these inequalities are listed in Table \ref{tabpoly}.
\end{proof}
\begin {center}
\begin {table}[t!]
\begin {tabular} {| c || c | c |}
\hline
 ~~ ~$\mathcal {N}^{2}_{m} $ ~~ ~ &  ~~ ~
     Classical Bound  ~~ ~ &  ~~ ~Quantum Value  ~~ ~\\\hline
$m = 5 $ & ~~ $\mathbf {w} _ 5[\mathbf {p} _c]\le4$ ~~ & $15 (\sqrt
{5} - 1)/4 $ \\
\hline
$m = 6 $ & ~~ $\mathbf {w} _ 6[\mathbf {p} _c]\le6$ ~~ & $6 .75 $ \\
\hline
$m = 7 $ & ~~ $\mathbf {w} _ 7[\mathbf {p} _c]\le2$ ~~ & $\approx 2.61443 $\\
\hline
$m = 8 $ & ~~ $\mathbf {w} _ 8[\mathbf {p} _c]\le8$ ~~ & $12\sqrt {2} \
- 8 $ \\
\hline
$m = 9 $ & ~~ $\mathbf {w} _ 9[\mathbf {p} _c]\le2$ ~~ & $\approx 2.76003$ \\
\hline
\end {tabular}
\captionsetup{justification=centering}
\caption {Quantum advantage in simulating polygon-MAC.}\label{tabpoly}
\end {table}\vspace{-.5cm}
\end {center}
Importantly, the nature of the quantum advantages established in Propositions \ref{prop1} \& \ref{prop2} are distinct from the advantage known in the communication complexity scenario \cite{Buhrman2010}. In communication complexity, the objective is to compute the value of a specific function whose inputs are distributed between two remote parties. The parties aim to achieve the goal using the least possible amount of communication between them. Canonical instances of such problems where quantum systems exhibit advantages over the classical counterparts are the task of quantum random access codes \cite{Wiesner1983,Ambainis2002,Ambainis2019}. Importantly, the quantum advantage in those problems relies on the exploitation of non-classical features of quantum systems during the encoding step as well as the decoding step. The sender prepares the quantum system in superposed states based on the inputs she receives, and the receiver, based on his inputs, selects a decoding measurement from a set of incompatible measurements. However, the scope for utilizing non-classical effects at the decoding step does not arise in the Frenkel-Weiner setup, as no input is provided to the receiver in this case. Interestingly, when considering a MAC, a new opportunity emerges at the decoding step where quantum effects can play a non-trivial role. The receiver can perform a global measurement, such as an entangled basis measurement, on the quantum systems received from different senders. This precisely happens while establishing the quantum advantage outlined in Propositions \ref{prop1} \& \ref{prop2}.

Naturally, the question arises: is an entanglement basis measurement necessary to achieve such an advantage? Interestingly, we will now demonstrate that this is not the case in general. To illustrate this, we consider a MAC involving three senders and one receiver. Senders are provided independent two-bit strings as inputs, {\it i.e.}, $X=Y=Z\equiv\{0,1\}^{\times2}$, while the receiver produces three-bit string outputs, {\it i.e.}, $A\equiv\{0,1\}^{\times3}$. Here also we take the reverse engineering approach to introduce the quantum strategy in $\mathbf{Q}_{ds}^{3}$ that leads us to the desired MAC (see Fig. \ref{fig2}). Since the receiver employs a decoding measurement in a product basis known as the SHIFT ensemble, we will refer to the resulting MAC as $\mathcal{N}^{3}_{shift}$, which can be compactly expressed as
\begin{align}
\mathcal{N}^{3}_{shift}&\equiv\left\{p({\bf a}|{\bf x},{\bf y},{\bf z})=\zeta_+^{\eta}\zeta_-^{(3-\eta)}\right\},\\
{\bf x},{\bf y},{\bf z}&\in \{0,1\}^{\times 2},~{\bf a}\in \{0,1\}^{\times 3},~\zeta_\pm:=\frac{1}{2}(1\pm\frac{1}{\sqrt{2}}),\nonumber\\
\mbox{where},~&p({\bf a}|{\bf x},{\bf y},{\bf z}):= 2 \delta_{3,\eta}-3 \delta_{2,\eta}-2 \delta_{1,\eta}-2 \delta_{0,\eta},\nonumber\\
\eta:=&\left\{\!\begin{aligned}
\delta_{x_1,a_1}+\delta_{y_1,a_2}+\delta_{z_1,a_3},~\text{if}~a_1=a_2=a_3,\\
\delta_{x_1,0}+\delta_{y_1,1}+\delta_{z_2,a_3},~~ \text{if}~a_1=a_2\neq a_3,\\
\delta_{x_1,1}+\delta_{y_2,a_2}+\delta_{z_1,0},~~ \text{if}~a_3=a_1\neq a_2,\\
\delta_{x_2,a_1}+\delta_{y_1,0}+\delta_{z_1,1},~~\text{if}~a_2=a_3\neq a_1.
\end{aligned}\right\}\label{eta}
\end{align}
Here $\delta$ denotes the Kronecker Delta symbol. Notably, the SHIFT measurement exhibits the phenomenon of `quantum nonlocality without entanglement' (QNWE) \cite{Bennett1999} (see also \cite{Bhattacharya2020}), and implementation of this measurement necessitates a global interaction among the three qubits \cite{Niset2006}. Our next result shows that $\mathcal{N}^{3}_{shift}$ cannot be simulable with the corresponding  classical strategies. 
\begin{figure}[t!]
\centering
\includegraphics[scale=0.5]{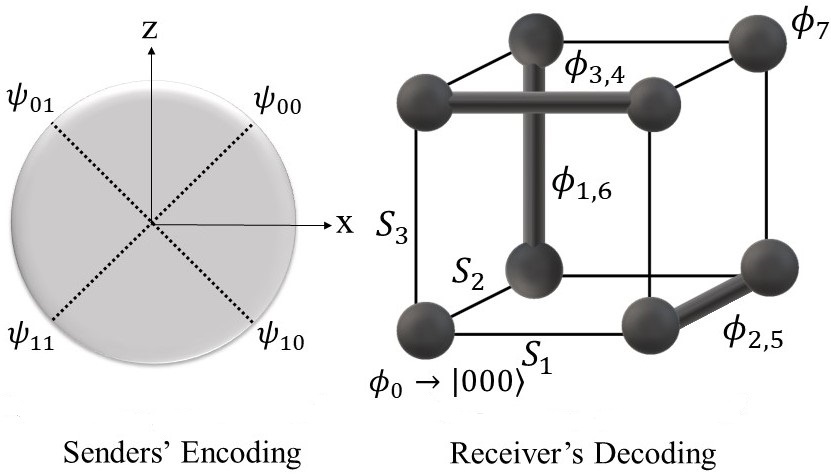}
\caption{$\mathbf{Q}_{ds}^{3}$ strategy simulating the MAC $\mathcal{N}^{3}_{shift}$. While the encoding states are symmetrically chosen from the $xz$-plane of the Bloch sphere, the receiver employs the decoding measurement in SHIFT basis \cite{Bennett1999}.}
\label{fig2}\vspace{-.5cm}
\end{figure}
\begin{proposition}\label{prop3}
The MAC $\mathcal{N}^{3}_{shift}$ cannot be simulated using any strategy from the set $\mathbf{C}^{3}_{cs}$.   
\end{proposition}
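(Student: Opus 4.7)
Following the templates of Propositions~\ref{prop1} and \ref{prop2}, the set $\mathbf{C}^{3\Arrow{.15cm}1}_{cs}$ is a convex polytope whose extreme points are exactly the deterministic strategies of Definition~\ref{def1}, so simulability of $\mathcal{N}^{3\Arrow{.15cm}1}_{shift}$ reduces to asking whether the vector of its conditional probabilities lies in the convex hull of the deterministic vectors in $\mathbf{C}^{3\Arrow{.15cm}1}_{ds}$. I would attack this membership question in the linear-programming / hyperplane-separation style already used for $\mathcal{N}^{2\Arrow{.15cm}1}_{m}$: either demonstrate infeasibility of the natural LP directly, or, equivalently by Minkowski--Hahn--Banach, produce a linear witness $\mathbf{w}_{shift}$ whose value on $\mathcal{N}^{3\Arrow{.15cm}1}_{shift}$ strictly exceeds $\max\{\mathbf{w}_{shift}[\mathbf{p}_c]\,:\,\mathbf{p}_c\in\mathbf{C}^{3\Arrow{.15cm}1}_{cs}\}$, the maximum being attained at a vertex.

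A brute enumeration is daunting -- there are $16^{3}\cdot 8^{8}$ deterministic strategies -- so the first move is to prune the search using the zero pattern of the MAC. The rank-$1$ SHIFT decoding forces $p(\mathbf{a}|\mathbf{x},\mathbf{y},\mathbf{z})=0$ whenever the SHIFT vector labelled by $\mathbf{a}$ is orthogonal to $|\psi^{\mathbf{x}}\rangle_{S_1}\otimes|\psi^{\mathbf{y}}\rangle_{S_2}\otimes|\psi^{\mathbf{z}}\rangle_{S_3}$, and any deterministic vertex carrying positive weight in a convex decomposition of $\mathcal{N}^{3\Arrow{.15cm}1}_{shift}$ must already output $0$ on every such forbidden input--output tuple. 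Enumerating the zero-compatible $(\mathrm{E}^c_1,\mathrm{E}^c_2,\mathrm{E}^c_3,\mathrm{D}^c)$ tuples, and then quotienting by the cyclic symmetry of the SHIFT ensemble together with the polygonal symmetry of the encoding in the $xz$-plane shown in Fig.~\ref{fig2}, should reduce the problem to a manageable list of orbits. I would then feed the surviving vertices into the LP; its infeasibility yields the proposition, and LP duality automatically produces the explicit witness to be recorded in Appendix-B.

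The principal obstacle is computational rather than conceptual: extracting a clean, facet-like witness (as opposed to a generic solver output) requires reverse-engineering coefficients that respect the symmetries of the SHIFT basis, and care must be taken that no entangled-measurement feature is being smuggled in -- the whole point of using the SHIFT ensemble is that the quantum advantage here is mediated by a \emph{product} decoding exhibiting QNWE. Finally, I would verify that the qubit strategy of Fig.~\ref{fig2} genuinely realises $\mathcal{N}^{3\Arrow{.15cm}1}_{shift}$ and violates the constructed witness; this is a direct computation from the SHIFT measurement statistics applied to the polygon encoding.
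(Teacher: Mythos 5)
Your overall frame---hyperplane separation, optimizing a linear witness over the deterministic vertices of $\mathbf{C}^{3\Arrow{.15cm}1}_{cs}$, then exhibiting a quantum violation---is indeed the route the paper takes. But the concrete step you rely on to make the enumeration tractable fails: unlike $\mathcal{N}^{2\Arrow{.15cm}1}_{PBR}$, the MAC $\mathcal{N}^{3\Arrow{.15cm}1}_{shift}$ has \emph{no} zero entries. Every encoding state $\psi_{\bf s}=\frac{1}{2}\bigl[\mathbf{I}+\tfrac{(-1)^{s_1}}{\sqrt{2}}\sigma_Z+\tfrac{(-1)^{s_2}}{\sqrt{2}}\sigma_X\bigr]$ has overlap $\zeta_{\pm}=\frac{1}{2}\bigl(1\pm\tfrac{1}{\sqrt{2}}\bigr)>0$ with each of $\ket{0},\ket{1},\ket{+},\ket{-}$, so every conditional probability equals $\zeta_+^{\eta}\zeta_-^{3-\eta}>0$ and no SHIFT vector is ever orthogonal to a product of encoding states. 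Consequently your "zero-compatibility" filter excludes nothing, the quotient by symmetries still leaves you facing essentially the full vertex set, and the proposal never actually produces the witness or the classical bound that constitute the proof. The zero-pattern argument is the mechanism of Proposition~\ref{prop1}, not of this one; carrying it over here is the genuine gap.

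What makes the computation feasible in the paper is a different observation: once a linear witness is fixed, the optimal deterministic \emph{decoding} for any fixed encoding triple decomposes message-by-message (for each received string $\alpha\beta\gamma$ one simply picks the output maximizing the accumulated coefficient), so only the encodings need to be enumerated---$8$ per sender up to relabeling, i.e.\ $8^3$ triples. The paper writes down an explicit witness, $W_{{\bf a}|{\bf x},{\bf y},{\bf z}}=2\delta_{3,\eta}-3\delta_{2,\eta}-2\delta_{1,\eta}-2\delta_{0,\eta}$ with $\eta$ counting bit agreements tailored to the SHIFT labels, uses the uniform ``$1$-$3$-$3$-$1$'' column structure to evaluate the quantum payoff $10(5\sqrt{2}-6)\approx10.71$, and tabulates the optimal classical payoff over all $8^3$ encodings, finding the maximum $8$ at $(\mathrm{e}^1,\mathrm{e}^1,\mathrm{e}^1)$. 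Your plan would need to be repaired along these lines (or by an honest LP over all vertices with the per-message decoding reduction) before it yields the claimed separation; as it stands, the pruning idea it hinges on is vacuous for this MAC.
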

\begin{proof}
    {\it Structure of the MAC:} $\mathcal{N}^{3}_{shift}$ involves three senders and one receiver. Each sender is provided independent two-bit strings as inputs and the receiver yields a three-bit string as output. Denoting the input stings of different senders respectively as ${\bf x}=x_1x_2$, ${\bf y}=y_1y_2$, ${\bf z}=z_1z_2$, and denoting the output string as ${\bf a}=a_1a_2a_3$, the MAC $\mathcal{N}^{3}_{shift}$ is specified by the conditional probabilities of obtaining the outputs given the inputs, {\it i.e.}, $\mathcal{N}^{3}_{shift}\equiv\{p({\bf a}|{\bf x},{\bf y},{\bf z})~|~x_i,y_i,z_i,a_k\in\{0,1\},~\forall~i\in\{1,2\}~\&~k\in\{1,2,3\}\}$. Here, we take a reverse engineering approach to specify the conditional probabilities using the desired quantum strategy in $\mathbf{Q}_{ds}^{3}$.  which defines the MAC $\mathcal{N}^{3}_{shift}$.Denoting the input string as ${\bf s}=s_1s_2\in\{{\bf x},{\bf y},{\bf z}\}$, encoding states of the respective senders are given by
\begin{align}
\psi_{{\bf s}}\equiv\ket{\psi_{\bf s}}\bra{\psi_{\bf s}}=\frac{1}{2} \left[\mathbf{I} + \frac{(-1)^{s_1}}{\sqrt{2}}\sigma_Z +\frac{(-1)^{s_2}}{\sqrt{2}}\sigma_X\right].\label{EncStates}
\end{align}
Important to note that the encoding states are symmetrically chosen from the $xz$-plane of the Bloch sphere. For decoding the receiver performs SHIFT basis measurement on the three qubits received from the senders and decodes the outcome according to the following strategy: 
\begin{align}
\left\{\!\begin{aligned}
\ket{\phi_0}&:=\ket{000}\mapsto000,~~~\ket{\phi_1}:=\ket{01-}\mapsto001,\\
\ket{\phi_2}&:=\ket{1-0}\mapsto 010,~~\ket{\phi_3}:=\ket{+01}\mapsto 011,\\
\ket{\phi_4}&:=\ket{-01}\mapsto 100,~~ \ket{\phi_5}:=\ket{1+0}\mapsto 101,\\
\ket{\phi_6}&:=\ket{01+}\mapsto 110,~~~~\ket{\phi_7}:=\ket{111}\mapsto 111
\end{aligned}\right\}.	\label{DecProj}
\end{align}
The $8 \times 64$ Stochastic Matrix of the MAC $\mathcal{N}^{3}_{shift}$ has some interesting symmetry. To note this symmetry, consider the case where all the senders have input $`00'$, {\it i.e.}, ${\bf x}={\bf y}={\bf z}=00$. The probabilities of different outcomes at receiver's end read as:
\begin{align*}
\left\{\!\begin{aligned}
p(000)&=\zeta_+^3,~~p(011)=p(101)=p(110)=\zeta_+^2\zeta_-\\
p(001)&=p(010)=p(100)=\zeta_-^2\zeta_+,~~p(111)=\zeta_-^3,
\end{aligned}\right\},
\end{align*}
where $\zeta_{\pm}:=\frac{1}{2}(1\pm\frac{1}{\sqrt{2}})$. In other words, for the case ${\bf x}={\bf y}={\bf z}=00$, exactly one outcome occurs with probability $\zeta_+^3$, exactly one occurs with probability $\zeta_-^3$, exactly three occur with probability $\zeta_-^2\zeta_+$, and the remaining three occur with probability $\zeta_-\zeta_+^2$. Interestingly, for any of the input-triples $({\bf x},{\bf y},{\bf z})\in\{0,1\}^{\times2}\times\{0,1\}^{\times2}\times\{0,1\}^{\times2}$, exactly the same pattern with the same probabilities hold (we call it the ``$1$-$3$-$3$-$1$" pattern). In other words, each of the columns of this $8 \times 64$ Stochastic Matrix $\mathcal{N}^{3}_{shift}$ are identical up-to some permutation of the rows.\\\\
{\it Quantum advantage:} To show that the MAC $\mathcal{N}^{3}_{shift}$ cannot be simulated by any strategy in $\mathbf{C}^{3}_{cs}$, we now construct a witness ${\bf W}_{shift}\equiv \{W_{{\bf a}|{\bf x},{\bf y},{\bf z}}~|~{\bf x},{\bf y},{\bf z}\in \{0,1\}^{\times 2},~{\bf a}\in \{0,1\}^{\times 3}\}$, with the entries given by 
\begin{align}
W_{{\bf a}|{\bf x},{\bf y},{\bf z}}&:= 2 \delta_{3,\eta}-3 \delta_{2,\eta}-2 \delta_{1,\eta}-2 \delta_{0,\eta}~~; \label{WitnessShift}
\end{align} 
where $\eta$ is defined in Eq.(5) of the main manuscript. The MAC $\mathcal{N}^{3}_{shift}$ yields the following payoff: 
\begin{align}
&{\bf W}_{shift}\left[\mathcal{N}^{3}_{shift}\right]\equiv{\bf W}_{shift}\cdot\mathcal{N}^{3}_{shift}:= \sum_{{\bf x},{\bf y},{\bf z},{\bf a}} W_{{\bf a}|{\bf x},{\bf y},{\bf z}} \times p({\bf a}|{\bf x},{\bf y},{\bf z})\nonumber\\
&=\sum_{{\bf x},{\bf y},{\bf z}} \left\{\sum_{{\bf a}}W_{{\bf a}|{\bf x},{\bf y},{\bf z}} \times p({\bf a}|{\bf x},{\bf y},{\bf z})\right\}\nonumber\\
&=\sum_{{\bf x},{\bf y},{\bf z}} \left\{\sum_{\eta=0}^3\left[1\times(+2) \delta_{3,\eta}+3\times(-3)\delta_{2,\eta}+3\times(-2)\delta_{1,\eta}+1\times(-2)\delta_{0,\eta} \right]\times \zeta_+^{\eta}\zeta_-^{(3-\eta)} \right\}\nonumber\\
&=\sum_{{\bf x},{\bf y},{\bf z}}\left\{2\zeta_+^3
-9\zeta_+^2\zeta_--6\zeta_-^2\zeta_+-2\zeta_-^3\right\}=64\times\left\{2\zeta_+^3
-9\zeta_+^2\zeta_--6\zeta_-^2\zeta_+-2\zeta_-^3\right\} \nonumber\\
&=10(5\sqrt{2}-6)\approx10.7107~.
\end{align}
Above calculation utilizes the feature of ``$1$-$3$-$3$-$1$" pattern. To establish the quantum advantage we our now left to show that any strategy in $\mathbf{C}^{3}_{cs}$ yield a lower than quantum value for the witness operator ${\bf W}_{shift}$. 

{\it Bound on the classical payoff:} Here prove that for any strategy in $\mathbf{C}^{3}_{cs}$ the payoff is upper bounded by $8$. Since $\mathbf{C}^{3}_{cs}$ forms a convex set and since the witness ${\bf W}_{shift}$ is linear, the optimal payoff will be achieved for some strategy belonging to $\mathbf{C}^{3}_{ds}$. A strategy in $\mathbf{C}^{3}_{ds}$ will lead to a $8\times64$ Stochastic Matrix $\mathcal{S}_d\equiv\{p_d({\bf a}|{\bf x},{\bf y},{\bf z})\}$, where all the entries $p_d({\bf a}|{\bf x},{\bf y},{\bf z})$ are `$0$' or `$1$'. For such a strategy the payoff reads as
\begin{align}
{\bf W}_{shift}\left[\mathcal{S}_d\right]= \sum_{{\bf x},{\bf y},{\bf z},{\bf a}} W_{{\bf a}|{\bf x},{\bf y},{\bf z}} \times p_d({\bf a}|{\bf x},{\bf y},{\bf z}) =\sum_{{\bf x},{\bf y},{\bf z},{\bf a}\atop p_d({\bf a}|{\bf x},{\bf y},{\bf z})=1} W_{{\bf a}|{\bf x},{\bf y},{\bf z}}~~.
\end{align}
Instead of considering all the deterministic strategies, we will try to find optimal deterministic decoding strategy for a given deterministic encoding. For a clear exposition we explicitly analyze one such case.   Consider the encoding strategy where all senders send the first bit of their respective strings to the receiver, {\it i.e.},
\begin{align}
\mathrm{E}_i=\mathrm{e}^1&:=\mathrm{E}_{bit}^{1^{st}}: \{00,01\}\mapsto0,~~~\{10,11\}\mapsto1;~~i\in\{1,2,3\}.\label{1stbit}
\end{align}
The procedure of finding the optimal decoding strategy for this encoding is described in Table \ref{tabshift}. It turns out that optimal decoding yields the payoff $8$.  

Up-to the freedom of relabeling, each sender can chose their deterministic encoding from the following set of eight deterministic encodings:
\fontsize{10}{5}
\begin{align}
\mathcal{E}\equiv\left\{\!\begin{aligned}
\mathrm{e}^0&:=\mathrm{E}_{const}: \{00,01,10,11\}\mapsto0,~\{\}\mapsto1;~~~
\mathrm{e}^1:=\mathrm{E}_{bit}^{1^{st}}: \{00,01\}\mapsto0,~~~~~\{10,11\}\mapsto1;\\
\mathrm{e}^2&:=\mathrm{E}_{bit}^{2^{nd}}: \{00,10\}\mapsto0,~~~~~\{01,11\}\mapsto1;~~~
\mathrm{e}^3:=\mathrm{E}_{par}: \{00,11\}\mapsto0,~~~~~\{01,10\}\mapsto1;\\
\mathrm{e}^4&:=\mathrm{E}^{1|3}_{00}: \{00\}\mapsto0,~~~~~~\{01,10,11\}\mapsto1;~~
\mathrm{e}^5:=\mathrm{E}^{1|3}_{01}: \{01\}\mapsto0,~~~~~~\{00,10,11\}\mapsto1;\\
\mathrm{e}^6&:=\mathrm{E}^{1|3}_{10}: \{10\}\mapsto0,~~~~~~\{00,01,11\}\mapsto1;~~
\mathrm{e}^7:=\mathrm{E}^{1|3}_{11}: \{11\}\mapsto0,~~~~~~\{00,01,10\}\mapsto1;
\end{aligned}\right\}.
\end{align}
\normalsize
\begin{center}
\begin{table}[t!]
\fontsize{8}{8}
\begin{tabular}
{|c|c|c|c|c|c|c|c|c|c|c|c|c|c|} 
\hline
\multirow{2}{*}{~$\alpha\beta\gamma$~} & \multirow{2}{*}{\footnotesize Possible Inputs} & \multirow{2}{*}{$\chi_{000}$} & \multirow{2}{*}{$\chi_{001}$} & \multirow{2}{*}{$\chi_{010}$} & \multirow{2}{*}{$\chi_{011}$} & \multirow{2}{*}{$\chi_{100}$} & \multirow{2}{*}{$\chi_{101}$} & \multirow{2}{*}{$\chi_{110}$} & \multirow{2}{*}{$\chi_{111}$} & \multirow{2}{*}{\footnotesize Optimal} & \multirow{2}{*}{\footnotesize Best } \\ 
\multirow{2}{*}{} & \multirow{2}{*}{$({\bf x},{\bf y},{\bf z})\equiv(x_1x_2,y_1y_2,z_1z_2)$} & \multirow{2}{*}{} & \multirow{2}{*}{} & \multirow{2}{*}{} & \multirow{2}{*}{} & \multirow{2}{*}{} & \multirow{2}{*}{} & \multirow{2}{*}{} & \multirow{2}{*}{} & \multirow{2}{*}{\footnotesize~~Decoding~~} & \multirow{2}{*}{\footnotesize Payoff} \\
\multirow{2}{*}{} & \multirow{2}{*}{} & \multirow{2}{*}{} & \multirow{2}{*}{} & \multirow{2}{*}{} & \multirow{2}{*}{} & \multirow{2}{*}{} & \multirow{2}{*}{} & \multirow{2}{*}{} & \multirow{2}{*}{} & \multirow{2}{*}{} & \multirow{2}{*}{} \\
\hline
\multirow{2}{*}{$000$} & ~$(00,00,00),~(00,00,01),~(00,01,00),~(00,01,01),$~ & \multirow{2}{*}{$16$} & \multirow{2}{*}{$-20$} & \multirow{2}{*}{$-20$} & \multirow{2}{*}{$-20$} & \multirow{2}{*}{$-20$} & \multirow{2}{*}{$-20$} & \multirow{2}{*}{$-20$} & \multirow{2}{*}{$-16$} & \multirow{2}{*}{$000$} & \multirow{2}{*}{$16$}\\ 
\multirow{2}{*}{} & $(01,00,00),(01,00,01),~(01,01,00),~(01,01,01)~$  & \multirow{2}{*}{} & \multirow{2}{*}{} & \multirow{2}{*}{} & \multirow{2}{*}{} & \multirow{2}{*}{} & \multirow{2}{*}{} & \multirow{2}{*}{} &\multirow{2}{*}{} & \multirow{2}{*}{} & \multirow{2}{*}{}\\
\hline
\multirow{2}{*}{$001$} & $(00,00,10),~(00,00,11),~(00,01,10),~(00,01,11),$ & \multirow{2}{*}{$-24$} & \multirow{2}{*}{$-20$} & \multirow{2}{*}{$-16$} & \multirow{2}{*}{$-4$} & \multirow{2}{*}{$-4$} & \multirow{2}{*}{$-16$} & \multirow{2}{*}{$-20$} & \multirow{2}{*}{$-16$} & \multirow{2}{*}{$011/100$} & \multirow{2}{*}{$-4$}\\ 
\multirow{2}{*}{} & $(01,00,10),~(01,00,11),~(01,01,10),~(01,01,11)~$  & \multirow{2}{*}{} & \multirow{2}{*}{} & \multirow{2}{*}{} & \multirow{2}{*}{} & \multirow{2}{*}{} & \multirow{2}{*}{} & \multirow{2}{*}{} & \multirow{2}{*}{} &\multirow{2}{*}{} & \multirow{2}{*}{}\\
\hline
\multirow{2}{*}{$010$} & $(00,10,00),~(00,10,01),~(00,11,00),~(00,11,01),$ & \multirow{2}{*}{$-24$} & \multirow{2}{*}{$-4$} & \multirow{2}{*}{$-20$} & \multirow{2}{*}{$-16$} & \multirow{2}{*}{$-16$} & \multirow{2}{*}{$-20$} & \multirow{2}{*}{$-4$} & \multirow{2}{*}{$-16$} & \multirow{2}{*}{$001/110$} & \multirow{2}{*}{$-4$}\\ 
\multirow{2}{*}{} & $(01,10,00),~(01,10,01),~(01,11,00),~(01,11,01)~$  & \multirow{2}{*}{} & \multirow{2}{*}{} & \multirow{2}{*}{} & \multirow{2}{*}{} & \multirow{2}{*}{} & \multirow{2}{*}{} & \multirow{2}{*}{} & \multirow{2}{*}{} & \multirow{2}{*}{}& \multirow{2}{*}{}\\
\hline
\multirow{2}{*}{$011$} & $(00,10,10),~(00,10,11),~(00,11,10),~(00,11,11),$ & \multirow{2}{*}{$-16$} & \multirow{2}{*}{$-4$} & \multirow{2}{*}{$-16$} & \multirow{2}{*}{$-20$} & \multirow{2}{*}{$-20$} & \multirow{2}{*}{$-16$} & \multirow{2}{*}{$-4$} & \multirow{2}{*}{$-24$} & \multirow{2}{*}{$001/110$} & \multirow{2}{*}{$-4$}\\ 
\multirow{2}{*}{} & $(01,10,10),~(01,10,11),~(01,11,10),~(01,11,11)~$  & \multirow{2}{*}{} & \multirow{2}{*}{} & \multirow{2}{*}{} & \multirow{2}{*}{} & \multirow{2}{*}{} & \multirow{2}{*}{} & \multirow{2}{*}{} & \multirow{2}{*}{} & \multirow{2}{*}{}& \multirow{2}{*}{}\\
\hline
\multirow{2}{*}{$100$} & $(10,00,00),~(10,00,01),~(10,01,00),~(10,01,01),$ & \multirow{2}{*}{$-24$} & \multirow{2}{*}{$-16$} & \multirow{2}{*}{$-4$} & \multirow{2}{*}{$-20$} & \multirow{2}{*}{$-20$} & \multirow{2}{*}{$-4$} & \multirow{2}{*}{$-16$} & \multirow{2}{*}{$-16$} & \multirow{2}{*}{$010/101$}& \multirow{2}{*}{$-4$}\\ 
\multirow{2}{*}{} & $(11,00,00),~(11,00,01),~(11,01,00),~(11,01,01)~$  & \multirow{2}{*}{} & \multirow{2}{*}{} & \multirow{2}{*}{} & \multirow{2}{*}{} & \multirow{2}{*}{} & \multirow{2}{*}{} & \multirow{2}{*}{} & \multirow{2}{*}{} & \multirow{2}{*}{}& \multirow{2}{*}{}\\
\hline
\multirow{2}{*}{$101$} & $(10,00,10),~(10,00,11),~(10,01,10),~(10,01,11),$ & \multirow{2}{*}{$-16$} & \multirow{2}{*}{$-16$} & \multirow{2}{*}{$-20$} & \multirow{2}{*}{$-4$} & \multirow{2}{*}{$-4$} & \multirow{2}{*}{$-20$} & \multirow{2}{*}{$-16$} & \multirow{2}{*}{$-24$} & \multirow{2}{*}{$011/100$}& \multirow{2}{*}{$-4$}\\ 
\multirow{2}{*}{} & $(11,00,10),~(11,00,11),~(11,01,10),~(11,01,11)~$  & \multirow{2}{*}{} & \multirow{2}{*}{} & \multirow{2}{*}{} & \multirow{2}{*}{} & \multirow{2}{*}{} & \multirow{2}{*}{} & \multirow{2}{*}{} & \multirow{2}{*}{} & \multirow{2}{*}{}& \multirow{2}{*}{}\\
\hline
\multirow{2}{*}{$110$} & $(10,10,00),~(10,10,01),~(10,11,00),~(10,11,01),$ & \multirow{2}{*}{$-16$} & \multirow{2}{*}{$-20$} & \multirow{2}{*}{$-4$} & \multirow{2}{*}{$-16$} & \multirow{2}{*}{$-16$} & \multirow{2}{*}{$-4$} & \multirow{2}{*}{$-20$} & \multirow{2}{*}{$-24$} & \multirow{2}{*}{$010/101$}& \multirow{2}{*}{$-4$}\\ 
\multirow{2}{*}{} & $(11,10,00),~(11,10,01),~(11,11,00),~(11,11,01)~$  & \multirow{2}{*}{} & \multirow{2}{*}{} & \multirow{2}{*}{} & \multirow{2}{*}{} & \multirow{2}{*}{} & \multirow{2}{*}{} & \multirow{2}{*}{} & \multirow{2}{*}{} & \multirow{2}{*}{}& \multirow{2}{*}{}\\
\hline
\multirow{2}{*}{$111$} & $(10,10,10),~(10,10,11),~(10,11,10),~(10,11,11),$ & \multirow{2}{*}{$-16$} & \multirow{2}{*}{$-20$} & \multirow{2}{*}{$-20$} & \multirow{2}{*}{$-20$} & \multirow{2}{*}{$-20$} & \multirow{2}{*}{$-20$} & \multirow{2}{*}{$-20$} & \multirow{2}{*}{$16$} & \multirow{2}{*}{$111$}& \multirow{2}{*}{$16$}\\ 
\multirow{2}{*}{} & $(11,10,10),~(11,10,11),~(11,11,10),~(11,11,11)~$  & \multirow{2}{*}{} & \multirow{2}{*}{} & \multirow{2}{*}{} & \multirow{2}{*}{} & \multirow{2}{*}{} & \multirow{2}{*}{} & \multirow{2}{*}{} & \multirow{2}{*}{} & \multirow{2}{*}{}& \multirow{2}{*}{}\\
\hline
\multicolumn{11}{|c|}{\bf Total Payoff (TP)} & \textbf{8}\\
\hline
\end {tabular}
\caption{In the first column, $\alpha\beta\gamma$ denotes the message receiver gets from the senders. Second column lists the set of inputs $({\bf x},{\bf y}, {\bf z})$ leading to the corresponding communication $\alpha\beta\gamma$ due to the encoding (\ref{1stbit}). The symbols $\chi_{ijk}$ in columns (3-10) denote the payoff obtained for the communication $\alpha\beta\gamma$, given that the receiver decodes $\alpha\beta\gamma\mapsto ijk$. For instance, if the receiver decodes the communication `$000$' to output `$000$', {\it i.e.}, $(\alpha\beta\gamma=000)\mapsto (ijk=000)$, then according to the witness ${\bf W}_{shift}$ as defined in Eq.(\ref{WitnessShift}), the payoff for all the possible inputs $\{(00,00,00),(00,00,01),(00,01,00),(00,01,01),(01,00,00),(01,00,01),(01,01,00),(01,01,01)\}$ is $+2$, leading to $\chi_{000}=8\times2=16$. On the other hand, if the receiver decodes $(\alpha\beta\gamma=000)\mapsto $ $(ijk=001)$, then the inputs $\{(00,00,01),(00,01,01),(01,00,01),(01,01,01)\}$ have payoff $-3$ each, and the remaining four possible inputs have payoff $-2$ each. Thus, $\chi_{001}=4\times(-3)+4\times(-2)=-20$. Rightmost column lists the best possible payoff for the communication $\alpha\beta\gamma$.}\label{tabshift}
\end {table}
\end {center}
\normalsize
While in Table \ref{tabshift} we analyze the case where three senders follow the encoding $\left(\mathrm{e}^1,\mathrm{e}^1,\mathrm{e}^1\right)$, similar analysis can be accomplished efficiently for $8^3$ different encoding triples $\left(\mathrm{e}^p,\mathrm{e}^q,\mathrm{e}^r\right)\in\mathcal{E}\times\mathcal{E}\times\mathcal{E}$, where $p,q,r\in\{0,1,\cdots,7\}$. In Appendix \ref{applast}, we list the optimal Total payoff (TP) for all these encodings, which in turn proves the claim. 
\end{proof}
\begin{remark}
    Proposition \ref{prop3} is remarkable from another perspective as well. As highlighted by Bennett and Shor \cite{Bennett1998}, for a quantum channel, four basic types of classical capacities can be defined, that correspond to utilization of either product or entangled states at the input, and product or entangled measurements at the output. Although these capacities become identical for a perfect quantum channel, in presence of noise, leveraging entanglement in encoding and decoding can provide a higher capacity compared to using only product encoding-decoding \cite{Fuchs1997,Holevo1998,Hastings2009}. However, for product decoding, such an advantage is absent even when entangled states are employed in the encoding step \cite{King2001}. Nevertheless, Proposition \ref{prop3} reveals that even a product basis measurement, which exhibits QNWE can prove beneficial for simulating a MAC.
\end{remark}

In the next, we describe simulation of the aforesaid MACs with 1 c-bit channel from each of the senders to the receiver, provided the channels get assisted with entanglement. 
\begin{proposition}\label{prop4}
The MACs $\mathcal{N}^{2}_{PBR}$, $\mathcal{N}^{3}_{shift}$, and $\mathcal{N}^{2}_{m}$ can all be simulated using $1$-cbit communication from each sender to the receiver, provided each communication line is assisted by a two-qubit maximally entangled state.  
\end{proposition}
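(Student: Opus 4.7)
The plan is to simulate each of the three quantum strategies underlying Propositions \ref{prop1}, \ref{prop2}, and \ref{prop3} by replacing the direct qubit channel with a remote state preparation (RSP) subroutine, at the cost of precisely one shared maximally entangled pair and one cbit per sender. The central observation is that every encoding state appearing in those strategies lies in the $xz$-plane of the Bloch sphere, and hence has real amplitudes in the computational basis. For such states, Pati-type RSP achieves exact preparation using $1$ ebit plus $1$ cbit, which exactly matches the resource budget stipulated in the proposition.

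Concretely, I would proceed as follows. Let $\ket{\phi^+}_{A_iR_i}$ denote the maximally entangled pair shared between sender $S_i$ and the receiver, with $A_i$ retained by the sender and $R_i$ held by the receiver. On input $x_i$, sender $S_i$ identifies the target encoding state $\ket{\psi_{x_i}}$ dictated by the original quantum protocol and projectively measures her qubit $A_i$ in the basis $\{\ket{\psi_{x_i}},\ket{\psi_{x_i}^\perp}\}$, transmitting the outcome $c_i\in\{0,1\}$ as her single classical bit. Using the identity
\begin{align*}
\ket{\phi^+}_{AR}=\tfrac{1}{\sqrt{2}}\bigl(\ket{\psi}_A\ket{\psi^*}_R+\ket{\psi^\perp}_A\ket{\psi^{*\perp}}_R\bigr),
\end{align*}
together with $\ket{\psi_{x_i}^*}=\ket{\psi_{x_i}}$ for real-amplitude states, the receiver's qubit $R_i$ is left in $\ket{\psi_{x_i}}$ when $c_i=0$ and in $\ket{\psi_{x_i}^\perp}$ when $c_i=1$. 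The receiver then applies the unitary $\sigma_Y$ to $R_i$ whenever $c_i=1$; a direct Bloch-sphere computation shows that $\sigma_Y\ket{\psi^\perp}$ equals $\ket{\psi}$ up to a global phase for every $\ket{\psi}$ lying on the $xz$ great circle, so this correction deterministically restores the desired encoding.

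After all $K$ senders complete this subroutine, the receiver's registers are in the joint pure state $\bigotimes_i\ket{\psi_{x_i}}$, which is exactly the state generated by the original qubit-communication strategy. The receiver then executes the same decoding POVM as in the original protocol -- the PBR-type entangled basis for $\mathcal{N}^{2\to 1}_{PBR}$, the qubit basis associated with $\ket{\phi^+}$ for $\mathcal{N}^{2\to 1}_m$, and the three-qubit SHIFT basis for $\mathcal{N}^{3\to 1}_{shift}$ -- so the simulated output distribution coincides with the target MAC on every input tuple. The only non-trivial step is verifying that the single correction $\sigma_Y$ suffices on the whole real-amplitude great circle (a routine two-by-two matrix check) and confirming that each MAC's encoding alphabet indeed lies in that plane, which is immediate from the explicit state definitions recalled earlier; beyond this, the argument is purely structural and presents no genuine obstacle.
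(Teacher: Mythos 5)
Your proposal is correct and follows essentially the same route as the paper, which simply invokes the remote state preparation protocol of Lo, Pati, and Bennett et al., noting that all encoding states lie on a great circle of the Bloch sphere, and then reuses the original decoding measurement. You merely spell out the details the paper delegates to those references (the measurement in $\{\ket{\psi_{x_i}},\ket{\psi_{x_i}^\perp}\}$, the $1$-cbit outcome, and the $\sigma_Y$ correction for real-amplitude states), and these details check out.
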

The claim simply follows from the familiar `remote state preparation' protocol \cite{Lo2000,Pati2000,Bennett2001}, since in all the  cases the encoding states are chosen from great circles of the Bloch sphere. The decoding step goes as that of the qubit based protocols. Proposition \ref{prop4} establishes nontrivial usage of quantum entanglement in network communication scenario.

\section{Discussions} Quantum advantages are elusive, difficult to establish, and often constrained by fundamental no-go theorems. For instance, quantum computing can provide speedup over the classical computers for a range of problems. However, the set of functions computable using quantum mechanics is precisely equivalent to what can be computed using classical physics. Similarly, in point-to-point communication Holevo-Frenkel-Weiner results limit the utility of quantum systems when no entanglement is shared between sender and receiver. In this work we show that such limitations do no hold true in network communication setup. Particularly, in simulating multiple-sender-to-one-receiver channels, our study uncovers a novel advantage of quantum communication over the the corresponding classical resource. Notably, the presented advantage is distinct from other recent studies on MAC \cite{Quek2017,Leditzky2020,Yun2020}, where it has been shown that nonlocal correlations shared among the distant senders can lead to higher channel capacities. As previously highlighted, our construction in Proposition \ref{prop1} draws inspiration from the renowned PBR theorem. This theorem establishes quantum wave-functions to be {\it $\psi$-ontic}, implying a direct correspondence with reality \cite{Harrigan2010}. It would be, therefore, intriguing to explore more direct connection between the $\psi$-onticity of quantum wave-functions and the quantum advantage reported here. It is noteworthy that the quantum advantages in Propositions \ref{prop1} \& \ref{prop2} do not rely on incompatible measurements at the decoding step; rather, measurements involving entangled projectors are utilized. Such measurements are known to play a pivotal role in the phenomenon of `quantum nonlocality without inputs' \cite{Renou2019}. The deeper connection between the reported quantum advantage and the concept of network nonlocality \cite{Tavakoli2022} is worth exploring. On the other hand, our construction also provides a way for semi-device-independent certification of entangled measurements \cite{Vertesi2011,Bennet2014,Supic2023}.  

The MAC presented in Proposition \ref{prop3} underscores the intricate role of QNWE in establishing the qubit advantage over the c-bit in network communication scenarios. Numerous other product bases are documented in literature that exhibit the QNWE phenomenon, with recent studies even introducing various variants of this phenomenon \cite{Halder2019, Rout2019, Ghosh2022}. Exploring these constructions to establish quantum advantages in the MAC scenario would be highly intriguing. Lastly, Proposition \ref{prop4} demonstrates that all the reported advantages of the qubit channel over the c-bit vanish when the latter is assisted with entanglement. It would be quite interesting to formulate a scenario where the qubit channel maintains an advantage over an entanglement-assisted c-bit channel.

\appendix
\section{Proof of Proposition \ref{prop1}}\label{appa}
Notably, few of the conditional probabilities in $\mathcal{N}^{2}_{PBR}$ are zero. In particular,
\begin{align}
p(\mathbf{a}|\mathbf{x},\mathbf{y})=0,~~\mbox{when}~a_1=\overline{x_1\oplus y_1}~\wedge~a_2=\overline{x_2\oplus y_2}.\label{zero}
\end{align}
Represented as a stochastic matrix, the MAC $\mathcal{N}^{2}_{PBR}$ reads as
\fontsize{7}{8}
\begin{align*}
\label{N2}
\mathcal{N}^{_{2}}_{PBR}
\equiv
\begingroup
\setlength{\tabcolsep}{1pt} 
\renewcommand{\arraystretch}{1} 
\begin{array}{c||c|c|c|c|c|c|c|c|c|c|c|c|c|c|c|c|} 
\mathbf{a}\textbackslash\mathbf{x},\mathbf{y}& 00,00 & 00,01 & 00,10  & 00,11& 01,00 & 01,01  & 01,10  & 01,11& 10,00 & 10,01  & 10,10 & 10,11& 11,00 & 11,01  & 11,10 & 11,11 \\ \hline\hline
00& 1/2 & 1/4 & 1/4 & \color{blue}{0}   & 1/4 & 1/2 & \color{blue}{0}   & 1/4 & 1/4 & \color{blue}{0}   & 1/2 & 1/4 & \color{blue}{0}   & 1/4 & 1/4 & 1/2 \\ \hline
01& 1/4 & 1/2 & \color{blue}{0}   & 1/4 & 1/2 & 1/4 & 1/4 & \color{blue}{0}   & \color{blue}{0}   & 1/4 & 1/4 & 1/2 & 1/4 & \color{blue}{0} & 1/2 & 1/4 \\ \hline
10& 1/4 & \color{blue}{0}   & 1/2 & 1/4 & \color{blue}{0}   & 1/4 & 1/4 & 1/2 & 1/2 & 1/4 & 1/4 & \color{blue}{0} 
  & 1/4 & 1/2 & \color{blue}{0}   & 1/4 \\ \hline
11& \color{blue}{0}   & 1/4 & 1/4 & 1/2 & 1/4 & \color{blue}{0}   & 1/2 & 1/4 & 1/4 & 1/2 & \color{blue}{0}   & 1/4 & 1/2 & 1/4 & 1/4 & \color{blue}{0} \\ \hline
\end{array}~.
\endgroup
\end{align*}
\normalsize
A classical deterministic strategy $\mathcal{S}\equiv(\mathrm{E}_1,\mathrm{E}_2,\mathrm{D})\in\mathbf{C}_{ds}^{2}$ aiming to simulate $\mathcal{N}^{2}_{PBR}$ must yield zero value for the conditional probabilities in Eq.(\ref{zero}). The encoding strategies for the senders are  deterministic functions from their two-bit input strings to one-bit message, {\it i.e.}
\begin{subequations}
\begin{align}
\mathrm{E}_1&:~\{0,1\}^{\otimes2}\ni\mathbf{x}\equiv x_1x_2\mapsto\alpha\in\{0,1\},\\
\mathrm{E}_2&:~\{0,1\}^{\otimes2}\ni\mathbf{y}\equiv y_1y_2\mapsto\beta\in\{0,1\}.
\end{align}
\end{subequations}
Accordingly, the senders send their respective one-bit messages $\alpha$ and $\beta$ to the receiver using $1$-bit classical channel. A deterministic decoding strategy at the receiver's end is a function from the communications received to the output, {\it i.e.}
\begin{align}
\mathrm{D}:~\alpha\times\beta\mapsto \mathbf{a}\equiv a_1a_2\in\{0,1\}^{\times2}. 
\end{align}
The following observations are crucial.
\begin{center}
\begin{table}[b!]
\begin{tabular}{|c|c|c|c|c|c|}
\hline
&&&&&\\
~~~~$\alpha\beta$~~~~& ~~$x_1x_2$~~ & ~~$y_1y_2$~~ & ~~$x_1\oplus y_1$~~ & ~~$x_2\oplus y_2$~~ & ~~$\overline{x_1\oplus y_1}~\overline{x_2\oplus y_2}$~~\\ 
&&&&&\\\hline\hline
\multirow{6}{*}{$00$} & $00$ & $00$ & $0$ & $0$ & $11$ \\ \cline{2-6} 
& $00$ & $01$ & $0$ & $1$ & $10$ \\ 
\cline{2-6}
& $00$ & $10$ & $1$ & $0$ & $01$ \\ 
\cline{2-6} 
& $01$ & $00$ & $0$ & $1$ & $10$ \\ 
\cline{2-6}
& $01$ & $01$ & $0$ & $0$ & $11$ \\ 
\cline{2-6}
& $01$ & $10$ & $1$ & $1$ & $00$  \\ \hline
\end{tabular}
\captionsetup{justification=centering}
\caption{An instance of disallowed encoding.}\label{disa}
\end{table}
\end{center}

\begin{observation}\label{obs1}
For the fixed encoding strategies followed by the senders, the conditions of Eq.(\ref{zero}) impose restriction on the allowed decodings by the receiver. For instance, consider the case where the senders use the encoding
\begin{subequations}
\begin{align}
\mathrm{E}_1&:~00\mapsto 0~\&~\{01,10,11\}\mapsto1;\\
\mathrm{E}_2&:~00\mapsto 0~\&~\{01,10,11\}\mapsto1.
\end{align}
\end{subequations}
The receiver cannot employ the decoding strategies $\mathrm{D}:00\mapsto11$, as it yields $p(11|00,00)=1$, a violation of the requirement (\ref{zero}). Therefore, for the given encoding this particular decoding is not allowed. 
\begin{center}
\begin{table}[t!]
\begin{tabular}{|c|c|c|c|c|c|c|}
\hline
&&&&&&Compatible\\
~~~~$\alpha\beta$~~~~& ~~$x_1x_2$~~ & ~~$y_1y_2$~~ & ~~$x_1\oplus y_1$~~ & ~~$x_2\oplus y_2$~~ & ~~$\overline{x_1\oplus y_1}~\overline{x_2\oplus y_2}$~~& Decoding\\&&&&&&$a_1a_2$\\ \hline\hline
\multirow{4}{*}{$00$} & $00$ & $00$ & $0$ & $0$ & $11$ & \multirow{4}{*}{$00~\mbox{or}~01$} \\ \cline{2-6} 
& $00$ & $01$ & $0$ & $1$ & $10$ & \multirow{4}{*}{}\\ 
\cline{2-6}
& $01$ & $00$ & $0$ & $1$ & $10$ & \multirow{4}{*}{}\\ 
\cline{2-6} 
& $01$ & $01$ & $0$ & $0$ & $11$ & \multirow{4}{*}{}\\ 
\hline
\multirow{4}{*}{$01$} & $00$ & $10$ & $1$ & $0$ & $01$ & \multirow{4}{*}{$11~\mbox{or}~10$} \\ \cline{2-6} 
& $00$ & $11$ & $1$ & $1$ & $00$ & \multirow{4}{*}{} \\ 
\cline{2-6}
& $01$ & $10$ & $1$ & $1$ & $00$ & \multirow{4}{*}{}\\ 
\cline{2-6} 
& $01$ & $11$ & $1$ & $0$ & $01$ & \multirow{4}{*}{}\\ 
\hline
\multirow{4}{*}{$10$} & $10$ & $00$ & $1$ & $0$ & $01$ & \multirow{4}{*}{$11~\mbox{or}~10$} \\ \cline{2-6} 
& $10$ & $01$ & $1$ & $1$ & $00$ & \multirow{4}{*}{} \\ 
\cline{2-6}
& $11$ & $00$ & $1$ & $1$ & $00$ & \multirow{4}{*}{}\\ 
\cline{2-6} 
& $11$ & $01$ & $1$ & $0$ & $01$ & \multirow{4}{*}{}\\ 
\hline
\multirow{4}{*}{$11$} & $10$ & $10$ & $0$ & $0$ & $11$ & \multirow{4}{*}{$00~\mbox{or}~01$} \\ \cline{2-6} 
& $10$ & $11$ & $0$ & $1$ & $10$ & \multirow{4}{*}{}\\ 
\cline{2-6}
& $11$ & $10$ & $0$ & $1$ & $10$ & \multirow{4}{*}{}\\ 
\cline{2-6} 
& $11$ & $11$ & $0$ & $0$ & $11$ & \multirow{4}{*}{}\\ 
\hline
\end{tabular}
\captionsetup{justification=centering}
\caption{Allowed decoding strategies for the encoding $\mathrm{E}_1=\mathrm{e}^1=\mathrm{E}_2$.}\label{ad}
\end{table}
\end{center}
\end{observation}
\begin{observation}\label{obs2}
For certain encoding strategies, there does not exist any decoding strategy compatible with the requirement (\ref{zero}). To see an explicit example, consider the following encodings
\begin{subequations}
\begin{align}
\mathrm{E}_1&:~\{00,01\}\mapsto 0~\&~\{10,11\}\mapsto 1;\\
\mathrm{E}_2&:~\{00,01,10\}\mapsto 0~\&~11\mapsto1.
\end{align}
\end{subequations}
In Table \ref{disa}, we analyze the case when the receiver gets $\alpha=0$ and $\beta=0$ from the two senders, respectively. While decoding, the message $\alpha\beta=00$ must be decoded as one of four possible outcomes $\mathbf{a}=a_1a_2\in\{0,1\}^{\times2}$. As evident from the last column of Table \ref{disa}, whatever output is decoded for the message $\alpha\beta=00$ the condition (\ref{zero}) gets violated. 
\end{observation}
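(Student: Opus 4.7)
The plan is to prove Observation \ref{obs2} by exhibiting one concrete pair of encodings for which no deterministic decoding can satisfy every zero constraint (\ref{zero}). I will take the encodings displayed in the observation, namely $\mathrm{E}_1: \{00,01\} \mapsto 0,\ \{10,11\} \mapsto 1$ and $\mathrm{E}_2: \{00,01,10\} \mapsto 0,\ 11 \mapsto 1$, and focus on the received message $(\alpha,\beta)=(0,0)$. This choice is natural because $\mathrm{E}_2$ collapses three distinct inputs to $\beta=0$, producing the largest possible preimage and hence the strongest decoding obligation.

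Next, I would enumerate the six input pairs $(\mathbf{x},\mathbf{y})\in\{00,01\}\times\{00,01,10\}$ that encode to $(0,0)$, and for each record the output forbidden by (\ref{zero}), namely $\overline{\mathbf{x}\oplus\mathbf{y}}$. A direct tabulation gives $(00,00)\to 11$, $(00,01)\to 10$, $(00,10)\to 01$, $(01,00)\to 10$, $(01,01)\to 11$, and $(01,10)\to 00$. The union of forbidden outputs across these six pairs is the entire set $\{0,1\}^{\times 2}$, so whichever value $\mathrm{D}(0,0)$ the decoder assigns, at least one zero requirement in (\ref{zero}) is violated. This establishes the observation.

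Conceptually there is no real obstacle here, since the claim is an existence statement that is settled by a single witness; the only design choice requiring care is to pick encodings whose preimages under $(\mathrm{E}_1,\mathrm{E}_2)$ have XOR vectors covering all of $\{0,1\}^2$, and the chosen $3$-to-$1$ collision in $\mathrm{E}_2$ together with the $2$-to-$1$ collision in $\mathrm{E}_1$ is the minimal configuration that achieves this. The value of Observation \ref{obs2} for the broader proof of Proposition \ref{prop1} is that it prunes the $2^{16}\times 2^{16}$ space of candidate encoding pairs, complementing Observation \ref{obs1}, and thereby isolates the $48$ deterministic strategies that remain viable under the zero constraints of $\mathcal{N}^{2\Arrow{.15cm}1}_{PBR}$.
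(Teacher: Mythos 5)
Your proposal is correct and is essentially the paper's own argument: it enumerates the six input pairs $(\mathbf{x},\mathbf{y})\in\{00,01\}\times\{00,01,10\}$ mapping to $\alpha\beta=00$ and notes that the forbidden outputs $\overline{\mathbf{x}\oplus\mathbf{y}}$ exhaust $\{0,1\}^{\times2}$, which is precisely the content of Table \ref{disa}. (A tangential slip: each sender has only $2^4=16$ deterministic encodings, so the space being pruned is $16\times16$ encoding pairs, not $2^{16}\times2^{16}$; this does not affect the proof of the observation.)
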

These observations subsequently lead us to the following general Lemma.
\begin{Lemma}\label{lemma1}
Encoding strategics where the four inputs are grouped in two disjoint set with equal cardinalities for both the senders are the only possible encodings compatible for simulating the MAC $\mathcal{N}^{2}_{PBR}$. 
\end{Lemma}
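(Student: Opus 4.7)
My plan is to translate the zero-probability constraints of Eq.~(\ref{zero}) into a purely set-theoretic condition on the encoding partitions and then dispatch every non-$(2,2)$ partition by a short sumset argument in the group $\mathbb{Z}_2^{\times 2}$. For a candidate deterministic encoding, write $A_\alpha := \mathrm{E}_1^{-1}(\alpha)$ and $B_\beta := \mathrm{E}_2^{-1}(\beta)$ for the preimages of the one-bit messages. A deterministic decoding must assign a single output $\mathrm{D}(\alpha,\beta) = \mathbf{a}^*$ to each message pair, and this output contributes nonzero probability to $p(\mathbf{a}^*|\mathbf{x},\mathbf{y})$ for every $(\mathbf{x},\mathbf{y}) \in A_\alpha \times B_\beta$. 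Compatibility with (\ref{zero}) therefore requires $\overline{\mathbf{a}^*} \neq \mathbf{x} \oplus \mathbf{y}$ for every such pair, i.e.\ $\overline{\mathbf{a}^*} \notin A_\alpha \oplus B_\beta$, where $A_\alpha \oplus B_\beta := \{\mathbf{x} \oplus \mathbf{y} \mid \mathbf{x} \in A_\alpha,\, \mathbf{y} \in B_\beta\}$ denotes the bitwise sumset. Because bitwise complementation is a bijection of $\{0,1\}^{\times 2}$, such an $\mathbf{a}^*$ exists if and only if $|A_\alpha \oplus B_\beta| \leq 3$, reducing the entire question to a sumset calculation.

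Next, I would rule out every partition other than $(2,2) \times (2,2)$ using two small ingredients. First, if one of the preimages has cardinality $4$, say $|A_\alpha| = 4$, then $A_\alpha = \{0,1\}^{\times 2}$ and translation by any single $\mathbf{y} \in B_\beta$ already exhausts the group, so $A_\alpha \oplus B_\beta = \{0,1\}^{\times 2}$ for every nonempty $B_\beta$; this eliminates the $(0,4)$ and $(4,0)$ splits. Second, for the $(1,3)$ and $(3,1)$ splits, inclusion–exclusion in the four-element ambient group does the job: whenever $|A_\alpha| + |B_\beta| \geq 5$, one has for every $\mathbf{z} \in \{0,1\}^{\times 2}$,
\begin{equation*}
|A_\alpha \cap (B_\beta \oplus \mathbf{z})| \geq |A_\alpha| + |B_\beta| - 4 \geq 1,
\end{equation*}
so $\mathbf{z} \in A_\alpha \oplus B_\beta$, forcing $|A_\alpha \oplus B_\beta| = 4$. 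Now if either sender uses a $(1,3)$- or $(3,1)$-type split (say $|A_1| = 3$), then $|B_0| + |B_1| = 4$ forces $\max(|B_0|,|B_1|) \geq 2$, so some message pair $(1,\beta)$ satisfies $|A_1| + |B_\beta| \geq 5$, violating the compatibility condition.

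The only splits surviving both tests are $(|A_0|,|A_1|) = (2,2)$ and $(|B_0|,|B_1|) = (2,2)$, which is precisely the assertion of the lemma. I do not anticipate a serious obstacle, since the ambient group $\mathbb{Z}_2^{\times 2}$ has only four elements and all counts are finite; the one-line inclusion–exclusion bound above is really the entire technical content, and is a miniature incarnation of the Cauchy–Davenport/Kneser phenomenon. The writeup would consist of the sumset reformulation followed by a short case table eliminating the four non-$(2,2)$ partition types, leaving the $(2,2) \times (2,2)$ analysis to the subsequent part of the proof of Proposition~\ref{prop1}, which then applies Observation~\ref{obs1} to trim the compatible decodings down to the $48$ deterministic strategies claimed in the outline.
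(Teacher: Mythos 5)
Your proposal is correct, and it reaches the lemma by a genuinely different route than the paper. The paper's argument is an enumeration: it reduces the partition sizes to $\mathbf{c}\in\{0,1,2\}$ up to relabeling, then checks the unbalanced cases explicitly (the three balanced encodings $\mathrm{e}^1,\mathrm{e}^2,\mathrm{e}^3$ against the $\mathrm{E}^{1|3}_{i,j}$ encodings, etc.), invoking Observation \ref{obs2}-style tables to show that each such pair of encodings admits no decoding consistent with Eq.~(\ref{zero}). You instead reformulate compatibility exactly: a deterministic decoding for the message pair $(\alpha,\beta)$ consistent with the zero pattern exists iff the XOR-sumset $A_\alpha\oplus B_\beta$ is a proper subset of $\mathbb{Z}_2^{\times 2}$ (complementation being a bijection), and then the single pigeonhole bound $|A\cap(B\oplus\mathbf{z})|\ge |A|+|B|-4$ shows the sumset is the whole group whenever $|A|+|B|\ge 5$. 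Since any split other than $(2,2)$ for either sender produces a preimage of size $3$ or $4$, and the other sender necessarily has a preimage of size at least $2$ (respectively at least $1$), some occurring message pair hits the bound, which uniformly disposes of all $(0,4)$, $(4,0)$, $(1,3)$, $(3,1)$ cases without any case tables; this is exactly the necessity claim of the lemma, with the restriction to identical balanced encodings and the count of $48$ strategies correctly deferred to Observations \ref{obs3} and \ref{obs4}. What your route buys is brevity, a uniform treatment of all unbalanced splits, and an argument that visibly generalizes (it is a Cauchy--Davenport-type bound, so it would survive larger input alphabets or groups); what the paper's enumeration buys is explicitness, since the same tables it builds to rule out bad encodings are reused to list the allowed decodings and to exhibit the global-shared-randomness simulation in Remarks \ref{rem1} and \ref{rem2}.
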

\begin{proof}
A generic encoding, $\mathrm{E}_1:\mathbf{x}\mapsto\alpha$, employed by the sender $S_1$ (and similarly for the sender $S_2$) is a partition of the input strings $\mathbf{x}\in\{0,1\}^{\times2}$ into two disjoint sets $\mathbf{x}^\alpha$, {\it i.e.},
\begin{align}
\mathbf{x}=\mathbf{x}^0\cup\mathbf{x}^1,~~\mbox{such~that}~~\mathbf{x}^0\cap\mathbf{x}^1=\emptyset.
\end{align}
Let, cardinalities of the sets $\mathbf{x}^0$ and $\mathbf{x}^1$ be $\mathbf{c}$ and $4-\mathbf{c}$, respectively, with $\mathbf{c}\in \{0,1,2,3,4\}$. As the partitions with $|\mathbf{x}^0|=\mathbf{c}$ and $|\mathbf{x}^0|=4-\mathbf{c}$ are same under relabeling, it is thus sufficient to analyze the cases $\mathbf{c}\in \{0,1,2\}$.

Let us first consider the case where $|\mathbf{x}^0|=2$ and $|\mathbf{y}^0|=1$. The first sender can employ one of the following strategies, 
\begin{subequations}
\begin{align}
\mathrm{e}^1&:=\mathrm{E}_{bit}^{1^{st}}: \{00,01\}\mapsto0,~~~\{10,11\}\mapsto1; \\
\mathrm{e}^2&:=\mathrm{E}_{bit}^{2^{nd}}: \{00,10\}\mapsto0,~~~\{01,11\}\mapsto1; \\
\mathrm{e}^3&:=\mathrm{E}_{par}: \{00,11\}\mapsto0,~~~\{01,10\}\mapsto1;
\end{align}
\end{subequations}
and the encoding of the second sender can be represented as 
\begin{align}
\mathrm{E}^{1|3}_{i,j}: \mathrm{ij}\mapsto0,~~~\{\Bar{\mathrm{i}}\mathrm{j},\mathrm{i}\Bar{\mathrm{j}},\Bar{\mathrm{i}}\Bar{\mathrm{j}}\}\mapsto1,~~~\mbox{where}~~\mathrm{i},\mathrm{j}\in\{0,1\}.
\end{align}
As it turns out, whatever encodings are followed by the senders, a situation like Observation \ref{obs2} arises. Same will be the case whenever $|\mathbf{x}^0|=1$ \& $|\mathbf{y}^0|=2$ as well as $|\mathbf{x}^0|<2$ \& $|\mathbf{y}^0|<2$. This concludes the proof.  
\end{proof}
Among the allowed encodings $\mathrm{E}_1,\mathrm{E}_2\in\left\{\mathrm{e}^1,\mathrm{e}^2,\mathrm{e}^3\right\}$, following observations are further relevant. 
\begin{observation}\label{obs3}
There does not exist a valid decoding compatible with (\ref{zero}) whenever the senders employ encoding $\mathrm{E}_1=\mathrm{e}^u$ and $\mathrm{E}_2=\mathrm{e}^v$, with $u\neq v$ and $u,v,\in\{1,2,3\}$.  
\end{observation}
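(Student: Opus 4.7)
My plan is to translate the zero constraint (\ref{zero}) into a combinatorial condition on the decoder and then show that this condition becomes unsatisfiable whenever $u\neq v$. The first step is to note that a deterministic strategy $(\mathrm{E}_1,\mathrm{E}_2,\mathrm{D})$ fulfills (\ref{zero}) if and only if
\begin{align*}
\mathrm{D}(\alpha,\beta)\neq\overline{\mathbf{x}\oplus\mathbf{y}}\quad\text{for all }(\mathbf{x},\mathbf{y})\text{ with }\mathrm{E}_1(\mathbf{x})=\alpha,\ \mathrm{E}_2(\mathbf{y})=\beta.
\end{align*}
Thus, to rule out every decoder it suffices to exhibit a single $(\alpha,\beta)$ whose preimage under $(\mathrm{E}_1,\mathrm{E}_2)=(e^u,e^v)$ already produces, as $(\mathbf{x},\mathbf{y})$ varies, every element of $\{0,1\}^{\times 2}$ as some forbidden value $\overline{\mathbf{x}\oplus\mathbf{y}}$; no legal choice of $\mathrm{D}(\alpha,\beta)$ then remains.

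The core step would be a short linear-algebraic observation over $\mathbb{F}_2$. Regarding $e^1,e^2,e^3$ as the three nontrivial $\mathbb{F}_2$-linear functionals $\ell_1,\ell_2,\ell_3$ on $\mathbb{F}_2^2$ (the two coordinate projections and the parity), the preimage of any $(\alpha,\beta)$ under $(e^u,e^v)$ is the Cartesian affine subset $(\mathbf{x}_0,\mathbf{y}_0)+(\ker\ell_u\times\ker\ell_v)$, and as $(\mathbf{x},\mathbf{y})$ ranges over it the XOR $\mathbf{x}\oplus\mathbf{y}$ traces the coset $(\mathbf{x}_0\oplus\mathbf{y}_0)+(\ker\ell_u+\ker\ell_v)$ inside $\mathbb{F}_2^2$. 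Each $\ker\ell_k$ is a distinct one-dimensional subspace of $\mathbb{F}_2^2$, so the sum $\ker\ell_u+\ker\ell_v$ fills all of $\mathbb{F}_2^2$ precisely when $u\neq v$; in that regime $\mathbf{x}\oplus\mathbf{y}$---and hence its bitwise complement---exhausts $\{0,1\}^{\times 2}$. Observation \ref{obs3} follows at once, with all six ordered pairs $(u,v)$ handled uniformly.

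For the benefit of the reader I would accompany this with one explicit tabulation in the spirit of Table \ref{disa}. For example, the pair $(u,v)=(1,2)$ with $(\alpha,\beta)=(0,0)$ has the four preimages $(\mathbf{x},\mathbf{y})\in\{(00,00),(00,10),(01,00),(01,10)\}$, producing forbidden outputs $\{11,01,10,00\}=\{0,1\}^{\times 2}$. The only real hurdle is presentational---choosing between the compact algebraic formulation and a brute six-case enumeration---and I would favour the former, with the explicit table serving as a sanity check.
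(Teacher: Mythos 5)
Your argument is correct, and it is worth noting how it relates to what the paper actually does: the paper offers no explicit derivation for Observation \ref{obs3}, treating it (like Observation \ref{obs2} and Lemma \ref{lemma1}) as the outcome of a case-by-case check in the spirit of Table \ref{disa}, i.e.\ an enumeration over the six ordered pairs $(u,v)$ and the message pairs $(\alpha,\beta)$. You replace that enumeration with a single structural fact: $\mathrm{e}^1,\mathrm{e}^2,\mathrm{e}^3$ are exactly the three nonzero $\mathbb{F}_2$-linear functionals on $\{0,1\}^{\times 2}$, their kernels are the three distinct one-dimensional subspaces, and for $u\neq v$ the sum $\ker\ell_u+\ker\ell_v$ is all of $\mathbb{F}_2^{2}$, so over the preimage of any fixed $(\alpha,\beta)$ the value $\mathbf{x}\oplus\mathbf{y}$ (hence the forbidden output $\overline{\mathbf{x}\oplus\mathbf{y}}$) sweeps out all four strings and no legal choice of $\mathrm{D}(\alpha,\beta)$ survives. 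This is the same underlying mechanism the paper invokes in Observation \ref{obs2} (a message pair whose preimage forbids every output), but your version proves it uniformly for all six $(u,v)$ and in fact for every $(\alpha,\beta)$, which is stronger than needed and eliminates the hidden brute-force step; the paper's enumeration, by contrast, stays entirely elementary and matches the tabular style used elsewhere in Appendix~A. Your sample tabulation for $(u,v)=(1,2)$, $(\alpha,\beta)=(0,0)$ checks out ($\{11,01,10,00\}$ exhausts the outputs), so the sanity check is consistent with the algebraic claim.
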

\begin{observation}\label{obs4}
For each of the encoding $\mathrm{E}_1=\mathrm{e}^u=\mathrm{E}_2$, there are $16$ possible decoding $\mathrm{d}^w_u$ leading to valid strategies $\mathcal{S}^{uw}\equiv\left(\mathrm{e}^u,\mathrm{e}^u,\mathrm{d}^w_u\right)$ that are compatible with (\ref{zero}); $u\in\{1,2,3\}$ and $w\in\{1,\cdots,16\}$. For instance, for the encoding $\mathrm{E}_1=\mathrm{e}^1=\mathrm{E}_2$ the allowed decoding strategies are listed in Table \ref{ad}.
\end{observation}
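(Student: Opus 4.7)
My plan is a direct cell-by-cell enumeration of the decoders compatible with the zero-requirement (\ref{zero}) for each of the three matched encoding pairs $(\mathrm{e}^u, \mathrm{e}^u)$. The claim is a counting statement, and the combinatorics decouples across the four fibres of the joint encoder, so the task reduces to four elementary local checks plus an appeal to symmetry across $u$.

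Fix, for instance, $(\mathrm{e}^1, \mathrm{e}^1)$. The joint encoder sends each $(\mathbf{x}, \mathbf{y}) \in \{0,1\}^{\times 2} \times \{0,1\}^{\times 2}$ to the first-bit pair $(x_1, y_1)$, partitioning the $16$ inputs into four fibres of size four indexed by $(\alpha, \beta) \in \{0,1\}^{\times 2}$. A deterministic decoder is an arbitrary function $\mathrm{D} : \{0,1\}^{\times 2} \to \{0,1\}^{\times 2}$, and constraint (\ref{zero}) demands $\mathrm{D}(\alpha, \beta) \ne \overline{\mathbf{x} \oplus \mathbf{y}}$ for every $(\mathbf{x}, \mathbf{y})$ in the fibre over $(\alpha, \beta)$. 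Since the fibres are disjoint, the values of $\mathrm{D}$ at different fibres can be chosen independently, so the number of compatible decoders factorises as the product over fibres of the admissible output-set sizes.

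I would then compute those sizes explicitly. At $(\alpha, \beta) = (0,0)$, the four preimages $(00,00), (00,01), (01,00), (01,01)$ produce forbidden outputs $\{11, 10\}$, leaving an admissible set of size two; analogous computations at $(0,1), (1,0), (1,1)$ each likewise yield exactly two admissible outputs, giving $2^4 = 16$ compatible decoders for $(\mathrm{e}^1, \mathrm{e}^1)$. The case $(\mathrm{e}^2, \mathrm{e}^2)$ follows from $(\mathrm{e}^1, \mathrm{e}^1)$ via the bit-swap symmetry of the bit-wise XOR in (\ref{zero}), while $(\mathrm{e}^3, \mathrm{e}^3)$ is not conjugate to the former by a coordinate permutation, so I would verify it directly: the fibres are now indexed by the parities $\alpha = x_1 \oplus x_2$ and $\beta = y_1 \oplus y_2$, and the same enumeration yields two admissible outputs per fibre, hence $16$ decoders again. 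Combining the three cases gives the $3 \times 16 = 48$ admissible deterministic strategies referenced in the proof outline of Proposition \ref{prop1}, confirming Observation \ref{obs4} and substantiating Table \ref{ad} as the explicit instance for $u = 1$. There is no substantive obstacle; the only care needed is the systematic listing of preimages and their bit-wise XORs, together with checking that the fibres really do yield a uniformly two-element admissible set in each cell rather than, say, a single-element or three-element one, which is the only place where the counting could have gone differently.
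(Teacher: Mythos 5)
Your proposal is correct and takes essentially the same route as the paper: the paper's Table \ref{ad} is exactly your fibre-wise enumeration for $(\mathrm{e}^1,\mathrm{e}^1)$, showing that each message $\alpha\beta$ admits precisely two outputs compatible with (\ref{zero}) and hence $2^4=16$ decoders, with the analogous count holding for $\mathrm{e}^2$ and $\mathrm{e}^3$ to give the $48$ strategies used in Proposition \ref{prop1}. Your explicit symmetry argument for $\mathrm{e}^2$ and direct check for $\mathrm{e}^3$ merely make explicit what the paper leaves as an analogous verification.
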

\begin{remark}\label{rem1}
The senders and receiver can further employ a strategy $\mathcal{S}^u[p_w]=\sum_w p_w\mathcal{S}^{uw}$, where $\{p_w\}_w$ denotes a probability distribution. Such a strategy can be employed using local randomness at receiver's end. As we note, none of these strategies can simulate the MAC $\mathcal{N}^{2}_{PBR}$. This can also be argued with simple reasoning. In this strategy, the encoding of the sender is fixed and there are only $16$ compatible decoding strategies exist. Any communication $\alpha\beta$ can be decoded into two possible outcomes which in turn leads to the impossibility of the simulation of MAC $\mathcal{N}^{2}_{PBR}$. For instance, it is clear from the above table that mixing all compatible decoding strategies for this fixed encoding will results in $p(10|00,00)=0$.
\end{remark}
\begin{remark}\label{rem2}
The senders and receiver can also employ a strategy $\mathcal{S}[p_{uw}]=\sum_{uw} p_{uw}\mathcal{S}^{uw}$, where $\{p_{uw}\}_{uw}$ denotes classical correlation shared among the senders and receiver. As it turns out the strategy $\mathcal{S}[\frac{1}{32}]=\sum_{u=1}^{2}\sum_{w=1}^{16}\frac{1}{32}\mathcal{S}^{uw}$ simulates the MAC $\mathcal{N}^{2}_{PBR}$. Importantly, this strategy cannot be implemented through $1$-cbit channel from each sender to the receiver assisted with SR resource $\$_{RS_1}\cup\$_{RS_1}$. However, it can be implemented if the resource $\$_{G}$ is allowed to be shared.  
\end{remark}

\section{Witness operators appeared in Proposition \ref{prop2}}\label{appb}
The linear inequalities for each $m\in\{5,\cdots,9\}$ is given by,
\begin{equation}
\mathbf{w}_m[\mathbf{p}_c]:=\mathbf{w}_m\cdot\mathbf{p}_c:= \sum_{x,y} w_m^{x,y}p_c(x,y) \le k_m,
\end{equation}
 The explicit forms of the witness operators $\mathbf{w}_m$ are provided below:
\begin{align*}
\mathbf{w}_5&=\begin{bmatrix}
  0 & 1 & -3 & -5 & 3 \\
  3 & 0 & 1 & -3 & -5 \\
  -5 & 3 & 0 & 1 & -3 \\
  -3 & -5 & 3 & 0 & 1 \\
  1 & -3 & -5 & 3 & 0 \\
\end{bmatrix};\hspace{1.9cm}
\mathbf{w}_6=\begin{bmatrix}
  0 & 4 & -4 & -4 & -5 & 2 \\
  2 & 0 & 4 & -4 & -4 & -2 \\
  -5 & 2 & 0 & 4 & -7 & -10 \\
  -4 & -5 & 2 & 0 & 4 & -4 \\
  -4 & -4 & -2 & 2 & 0 & 4 \\
  4 & -7 & -4 & -5 & 2 & 0 \\
\end{bmatrix};\\
\mathbf{w}_7&=\begin{bmatrix}
  0 & 1 & -1 & -1 & -1 & -1 & 1 \\
  1 & 0 & 1 & -1 & -1 & -1 & -1 \\
  -1 & 1 & 0 & 1 & -1 & -1 & -1 \\
  -1 & -1 & 1 & 0 & 1 & -1 & -1 \\
  -1 & -1 & -1 & 1 & 0 & 1 & -1 \\
  -1 & -1 & -1 & -1 & 1 & 0 & 1 \\
  1 & -1 & -1 & -1 & -1 & 1 & 0 \\
\end{bmatrix};~~~
\mathbf{w}_8=\begin{bmatrix}
  0 & 0 & 0 & 0 & 0 & 0 & 0 & 0 \\
  1 & 0 & 7 & 0 & -11 & 0 & -5 & 0 \\
  0 & 0 & 0 & 0 & 0 & 0 & 0 & 0 \\
  -5 & 0 & 1 & 0 & 7 & 0 & -11 & 0 \\
  0 & 0 & 0 & 0 & 0 & 0 & 0 & 0 \\
  -11 & 0 & -5 & 0 & 1 & 0 & 7 & 0 \\
  0 & 0 & 0 & 0 & 0 & 0 & 0 & 0 \\
  7 & 0 & -11 & 0 & -5 & 0 & 1 & 0 \\
\end{bmatrix};\\
&\hspace{2cm}\mathbf{w}_9=\begin{bmatrix}
  0 & 0 & 1 & -1 & -1 & -1 & -1 & 1 & 0 \\
  0 & 0 & 0 & 1 & -1 & -1 & -1 & -1 & 1 \\
  1 & 0 & 0 & 0 & 1 & -1 & -1 & -1 & -1 \\
  -1 & 1 & 0 & 0 & 0 & 1 & -1 & -1 & -1 \\
  -1 & -1 & 1 & 0 & 0 & 0 & 1 & -1 & -1 \\
  -1 & -1 & -1 & 1 & 0 & 0 & 0 & 1 & -1 \\
  -1 & -1 & -1 & -1 & 1 & 0 & 0 & 0 & 1 \\
  1 & -1 & -1 & -1 & -1 & 1 & 0 & 0 & 0 \\
  0 & 1 & -1 & -1 & -1 & -1 & 1 & 0 & 0 \\
\end{bmatrix}.~~
\end{align*}

\begin{table}[t!]
\section{Payoff Table: Proposition \ref{prop3}}\label{applast}
\vspace{-2cm}
\begin{scriptsize}
\begin{minipage}{.15\linewidth}
\vspace{-2cm}
\centering
\begin{tabular}{||c|c|c|c||}
\hline
\multicolumn{3}{|c|}{\bf Encoding} & \multirow{2}{*}{\bf TP} \\ 
\cline{1-3} 
$\mathrm{e}^i$ & $\mathrm{e}^j$ & $\mathrm{e}^k$ & \multirow{2}{*}{}\\
\hline\hline
\multirow{2}{*}{$0$}& $0$ & $0,3$ &\multirow{4}{*}{ \rotatebox{90}{$-120$} }\\\cline{2-3}
\multirow{ 2 }{*}{}& $3$ & $0,3$ &\multirow{ 4 }{*}{}\\\cline{1-3}
\multirow{ 2 }{*}{ $3$ }& $0$ & $0,3$ &\multirow{ 4 }{*}{}\\\cline{2-3}
\multirow{ 2 }{*}{}& $3$ & $0,3$ &\multirow{ 4 }{*}{}\\
\hline
\multirow{ 6 }{*}{ $0$ }& $0$ & $4, 5, 6, 7$ &\multirow{ 20 }{*}{ \rotatebox{90}{$-108$} }\\\cline{2-3}
\multirow{ 6 }{*}{}& $3$ & $4, 5, 6, 7$ &\multirow{ 20 }{*}{}\\\cline{2-3}
\multirow{ 6 }{*}{}& $4$ & $0, 3$ &\multirow{ 20 }{*}{}\\\cline{2-3}
\multirow{ 6 }{*}{}& $5$ & $0, 3$ &\multirow{ 20 }{*}{}\\\cline{2-3}
\multirow{ 6 }{*}{}& $6$ & $0, 3$ &\multirow{ 20 }{*}{}\\\cline{2-3}
\multirow{ 6 }{*}{}& $7$ & $0, 3$ &\multirow{ 20 }{*}{}\\\cline{1-3}
\multirow{ 6 }{*}{ $3$ }& $0$ & $4, 5, 6, 7$ &\multirow{ 20 }{*}{}\\\cline{2-3}
\multirow{ 6 }{*}{}& $3$ & $4, 5, 6, 7$ &\multirow{ 20 }{*}{}\\\cline{2-3}
\multirow{ 6 }{*}{}& $4$ & $0, 3$ &\multirow{ 20 }{*}{}\\\cline{2-3}
\multirow{ 6 }{*}{}& $5$ & $0, 3$ &\multirow{ 20 }{*}{}\\\cline{2-3}
\multirow{ 6 }{*}{}& $6$ & $0, 3$ &\multirow{ 20 }{*}{}\\\cline{2-3}
\multirow{ 6 }{*}{}& $7$ & $0, 3$ &\multirow{ 20 }{*}{}\\\cline{1-3}
\multirow{ 2 }{*}{ 4 }& $0$ & $0, 3$ &\multirow{ 20 }{*}{}\\\cline{2-3}
\multirow{ 2 }{*}{}& $3$ & $0, 3$ &\multirow{ 20 }{*}{}\\\cline{1-3}
\multirow{ 2 }{*}{ 5 }& $0$ & $0, 3$ &\multirow{ 20 }{*}{}\\\cline{2-3}
\multirow{ 2 }{*}{}& $3$ & $0, 3$ &\multirow{ 20 }{*}{}\\\cline{1-3}
\multirow{ 2 }{*}{ $6$ }& $0$ & $0, 3$ &\multirow{ 20 }{*}{}\\\cline{2-3}
\multirow{ 2 }{*}{}& $3$ & $0, 3$ &\multirow{ 20 }{*}{}\\\cline{1-3}
\multirow{ 2 }{*}{ $7$ }& $0$ & $0, 3$ &\multirow{ 20 }{*}{}\\\cline{2-3}
\multirow{ 2 }{*}{}& $3$ & $0, 3$ &\multirow{ 20 }{*}{}\\
\hline
\multirow{ 4 }{*}{ $0$ }& $0$ & $1, 2$ &\multirow{ 13 }{*}{ \rotatebox{90}{$-96$} }\\\cline{2-3}
\multirow{ 4 }{*}{}& $1$ & $0, 3$ &\multirow{ 13 }{*}{}\\\cline{2-3}
\multirow{ 4 }{*}{}& $2$ & $0, 2, 3$ &\multirow{ 13 }{*}{}\\\cline{2-3}
\multirow{ 4 }{*}{}& $3$ & $1, 2$ &\multirow{ 13 }{*}{}\\\cline{1-3}
\multirow{ 2 }{*}{ $1$ }& $0$ & $0, 3$ &\multirow{ 13 }{*}{}\\\cline{2-3}
\multirow{ 2 }{*}{}& $3$ & $0, 3$ &\multirow{ 13 }{*}{}\\\cline{1-3}
\multirow{ 3 }{*}{ $2$ }& $0$ & $0, 2, 3$ &\multirow{ 13 }{*}{}\\\cline{2-3}
\multirow{ 3 }{*}{}& $2$ & $0, 2, 3$ &\multirow{ 13 }{*}{}\\\cline{2-3}
\multirow{ 3 }{*}{}& $3$ & $0, 2, 3$ &\multirow{ 13 }{*}{}\\\cline{1-3}
\multirow{ 4 }{*}{ 3 }& $0$ & $1, 2$ &\multirow{ 13 }{*}{}\\\cline{2-3}
\multirow{ 4 }{*}{}& $1$ & $0, 3$ &\multirow{ 13 }{*}{}\\\cline{2-3}
\multirow{ 4 }{*}{}& $2$ & $0, 2, 3$ &\multirow{ 13 }{*}{}\\\cline{2-3}
\multirow{ 4 }{*}{}& $3$ & $1, 2$ &\multirow{ 13 }{*}{}\\
\hline
\multirow{ 3 }{*}{ $0$ }& $2$ & $4, 5$ &\multirow{ 16 }{*}{ \rotatebox{90}{$-92$} }\\\cline{2-3}
\multirow{ 3 }{*}{}& $6$ & $2$ &\multirow{ 16 }{*}{}\\\cline{2-3}
\multirow{ 3 }{*}{}& $7$ & $2$ &\multirow{ 16 }{*}{}\\\cline{1-3}
\multirow{ 4 }{*}{ $2$ }& $0$ & $6, 7$ &\multirow{ 16 }{*}{}\\\cline{2-3}
\multirow{ 4 }{*}{}& $3$ & $6, 7$ &\multirow{ 16 }{*}{}\\\cline{2-3}
\multirow{ 4 }{*}{}& $4$ & $0, 3$ &\multirow{ 16 }{*}{}\\\cline{2-3}
\multirow{ 4 }{*}{}& $5$ & $0, 3$ &\multirow{ 16 }{*}{}\\\cline{1-3}
\multirow{ 3 }{*}{ $3$ }& $2$ & $4, 5$ &\multirow{ 16 }{*}{}\\\cline{2-3}
\multirow{ 3 }{*}{}& $6$ & $2$ &\multirow{ 16 }{*}{}\\\cline{2-3}
\multirow{ 3 }{*}{}& $7$ & $2$ &\multirow{ 16 }{*}{}\\\cline{1-3}
\multirow{ 2 }{*}{ $4$ }& $0$ & $2$ &\multirow{ 16 }{*}{}\\\cline{2-3}
\multirow{ 2 }{*}{}& $3$ & $2$ &\multirow{ 16 }{*}{}\\\cline{1-3}
\multirow{ 2 }{*}{ $5$ }& $0$ & $2$ &\multirow{ 16 }{*}{}\\\cline{2-3}
\multirow{ 2 }{*}{}& $3$ & $2$ &\multirow{ 16 }{*}{}\\\cline{1-3}
\multirow{ 1 }{*}{ $6$ }& $2$ & $0, 3$ &\multirow{ 16 }{*}{}\\\cline{1-3}
\multirow{ 1 }{*}{ $7$ }& $2$ & $0, 3$ &\multirow{ 16 }{*}{}\\
\hline
\multirow{ 4 }{*}{ $0$ }& $4$ & $4, 5, 6, 7$ &\multirow{ 8 }{*}{\rotatebox{90}{$-86$}}\\\cline{2-3}
\multirow{ 4 }{*}{}& $5$ & $4, 5, 6, 7$ &\multirow{ 8 }{*}{}\\\cline{2-3}
\multirow{ 4 }{*}{}& $6$ & $4, 5, 6, 7$ &\multirow{ 8 }{*}{}\\\cline{2-3}
\multirow{ 4 }{*}{}& $7$ & $4, 5, 6, 7$ &\multirow{ 8 }{*}{}\\\cline{1-3}
\multirow{ 4 }{*}{ $3$ }& 4 & $4, 5, 6, 7$ &\multirow{ 8 }{*}{}\\\cline{2-3}
\multirow{ 4 }{*}{}& $5$ & $4, 5, 6, 7$ &\multirow{ 8 }{*}{}\\\cline{2-3}
\multirow{ 4 }{*}{}& $6$ & $4, 5, 6, 7$ &\multirow{ 8 }{*}{}\\\cline{2-3}
\multirow{ 4 }{*}{}& $7$ & $4, 5, 6, 7$ &\multirow{ 8 }{*}{}\\
\hline
&  & $\Downarrow$ & \\
\end{tabular}
\end{minipage}
\hspace{.3cm}
\begin{minipage}{.18\linewidth}
\vspace{-2cm}
\centering
\begin{tabular}{||c|c|c|c||}
&  & \multirow{2}{*}{$\Downarrow$} & \\
&  & \multirow{2}{*}{} & \\
\hline
\multirow{ 6 }{*}{ $4$ }& $0$ & $4, 5, 6, 7$ &\multirow{ 24 }{*}{\rotatebox{90}{$-86$}}\\\cline{2-3}
\multirow{ 6 }{*}{}& $3$ & $4, 5, 6, 7$ &\multirow{ 24 }{*}{}\\\cline{2-3}
\multirow{ 6 }{*}{}& $4$ & $0, 3$ &\multirow{ 24 }{*}{}\\\cline{2-3}
\multirow{ 6 }{*}{}& $5$ & $0, 3$ &\multirow{ 24 }{*}{}\\\cline{2-3}
\multirow{ 6 }{*}{}& $6$ & $0, 3$ &\multirow{ 24 }{*}{}\\\cline{2-3}
\multirow{ 6 }{*}{}& $7$ & $0, 3$ &\multirow{ 24 }{*}{}\\\cline{1-3}
\multirow{ 6 }{*}{ $5$ }& 0 & $4, 5, 6, 7$ &\multirow{ 24 }{*}{}\\\cline{2-3}
\multirow{ 6 }{*}{}& $3$ & $4, 5, 6, 7$ &\multirow{ 24 }{*}{}\\\cline{2-3}
\multirow{ 6 }{*}{}& $4$ & $0, 3$ &\multirow{ 24 }{*}{}\\\cline{2-3}
\multirow{ 6 }{*}{}& $5$ & $0, 3$ &\multirow{ 24 }{*}{}\\\cline{2-3}
\multirow{ 6 }{*}{}& $6$ & $0, 3$ &\multirow{ 24 }{*}{}\\\cline{2-3}
\multirow{ 6 }{*}{}& $7$ & $0, 3$ &\multirow{ 24 }{*}{}\\\cline{1-3}
\multirow{ 6 }{*}{ $6$ }& $0$ & $4, 5, 6, 7$ &\multirow{ 24 }{*}{}\\\cline{2-3}
\multirow{ 6 }{*}{}& $3$ & $4, 5, 6, 7$ &\multirow{ 24 }{*}{}\\\cline{2-3}
\multirow{ 6 }{*}{}& $4$ & $0, 3$ &\multirow{ 24 }{*}{}\\\cline{2-3}
\multirow{ 6 }{*}{}& $5$ & $0, 3$ &\multirow{ 24 }{*}{}\\\cline{2-3}
\multirow{ 6 }{*}{}& $6$ & $0, 3$ &\multirow{ 24 }{*}{}\\\cline{2-3}
\multirow{ 6 }{*}{}& $7$ & $0, 3$ &\multirow{ 24 }{*}{}\\\cline{1-3}
\multirow{ 6 }{*}{ $7$ }& $0$ & $4, 5, 6, 7$ &\multirow{ 24 }{*}{}\\\cline{2-3}
\multirow{ 6 }{*}{}& $3$ & $4, 5, 6, 7$ &\multirow{ 24 }{*}{}\\\cline{2-3}
\multirow{ 6 }{*}{}& $4$ & $0, 3$ &\multirow{ 24 }{*}{}\\\cline{2-3}
\multirow{ 6 }{*}{}& $5$ & $0, 3$ &\multirow{ 24 }{*}{}\\\cline{2-3}
\multirow{ 6 }{*}{}& $6$ & $0, 3$ &\multirow{ 24 }{*}{}\\\cline{2-3}
\multirow{ 6 }{*}{}& $7$ & $0, 3$ &\multirow{ 24 }{*}{}\\
\hline
\multirow{ 3 }{*}{ $0$ }& $2$ & $6, 7$ &\multirow{ 16 }{*}{ \rotatebox{90}{$-80$} }\\\cline{2-3}
\multirow{ 3 }{*}{}& $4$ & $2$ &\multirow{ 16 }{*}{}\\\cline{2-3}
\multirow{ 3 }{*}{}& $5$ & $2$ &\multirow{ 16 }{*}{}\\\cline{1-3}
\multirow{ 4 }{*}{ $2$ }& $0$ & $4, 5$ &\multirow{ 16 }{*}{}\\\cline{2-3}
\multirow{ 4 }{*}{}& $3$ & $4, 5$ &\multirow{ 16 }{*}{}\\\cline{2-3}
\multirow{ 4 }{*}{}& $6$ & $0, 3$ &\multirow{ 16 }{*}{}\\\cline{2-3}
\multirow{ 4 }{*}{}& $7$ & $0, 3$ &\multirow{ 16 }{*}{}\\\cline{1-3}
\multirow{ 3 }{*}{ $3$ }& $2$ & $6, 7$ &\multirow{ 16 }{*}{}\\\cline{2-3}
\multirow{ 3 }{*}{}& $4$ & $2$ &\multirow{ 16 }{*}{}\\\cline{2-3}
\multirow{ 3 }{*}{}& $5$ & $2$ &\multirow{ 16 }{*}{}\\\cline{1-3}
\multirow{ 1 }{*}{ $4$ }& $2$ & $0, 3$ &\multirow{ 16 }{*}{}\\\cline{1-3}
\multirow{ 1 }{*}{ $5$ }& $2$ & $0, 3$ &\multirow{ 16 }{*}{}\\\cline{1-3}
\multirow{ 2 }{*}{ $6$ }& $0$ & $2$ &\multirow{ 16 }{*}{}\\\cline{2-3}
\multirow{ 2 }{*}{}& $3$ & $2$ &\multirow{ 16 }{*}{}\\\cline{1-3}
\multirow{ 2 }{*}{ $7$ }& $0$ & $2$ &\multirow{ 16 }{*}{}\\\cline{2-3}
\multirow{ 2 }{*}{}& $3$ & $2$ &\multirow{ 16 }{*}{}\\
\hline
\multirow{ 2 }{*}{ $2$ }& $4$ & $6, 7$ &\multirow{ 8 }{*}{ \rotatebox{90}{$-76$} }\\\cline{2-3}
\multirow{ 2 }{*}{}& $5$ & $6, 7$ &\multirow{ 8 }{*}{}\\\cline{1-3}
\multirow{ 2 }{*}{ $4$ }& $6$ & $2$ &\multirow{ 8 }{*}{}\\\cline{2-3}
\multirow{ 2 }{*}{}& $7$ & $2$ &\multirow{ 8 }{*}{}\\\cline{1-3}
\multirow{ 2 }{*}{ $5$ }& $6$ & $2$ &\multirow{ 8 }{*}{}\\\cline{2-3}
\multirow{ 2 }{*}{}& $7$ & $2$ &\multirow{ 8 }{*}{}\\\cline{1-3}
\multirow{ 1 }{*}{ $6$ }& $2$ & $4, 5$ &\multirow{ 8 }{*}{}\\\cline{1-3}
\multirow{ 1 }{*}{ $7$ }& $2$ & $4, 5$ &\multirow{ 8 }{*}{}\\
\hline
\multirow{ 4 }{*}{ $2$ }& $4$ & $4, 5$ &\multirow{ 13 }{*}{ \rotatebox{90}{$-68$} }\\\cline{2-3}
\multirow{ 4 }{*}{}& $5$ & $4, 5$ &\multirow{ 13 }{*}{}\\\cline{2-3}
\multirow{ 4 }{*}{}& $6$ & $6, 7$ &\multirow{ 13 }{*}{}\\\cline{2-3}
\multirow{ 4 }{*}{}& $7$ & $6, 7$ &\multirow{ 13 }{*}{}\\\cline{1-3}
\multirow{ 3 }{*}{ $4$ }& $2$ & $4, 5$ &\multirow{ 13 }{*}{}\\\cline{2-3}
\multirow{ 3 }{*}{}& $4$ & $2$ &\multirow{ 13 }{*}{}\\\cline{2-3}
\multirow{ 3 }{*}{}& $5$ & $2$ &\multirow{ 13 }{*}{}\\\cline{1-3}
\multirow{ 3 }{*}{ $5$ }& $2$ & $4, 5$ &\multirow{ 13 }{*}{}\\\cline{2-3}
\multirow{ 3 }{*}{}& $4$ & $2$ &\multirow{ 13 }{*}{}\\\cline{2-3}
\multirow{ 3 }{*}{}& $5$ & $2$ &\multirow{ 13 }{*}{}\\\cline{1-3}
\multirow{ 3 }{*}{ $6$ }& $2$ & $6, 7$ &\multirow{ 13 }{*}{}\\\cline{2-3}
\multirow{ 3 }{*}{}& $6$ & $2$ &\multirow{ 13 }{*}{}\\\cline{2-3}
\multirow{ 3 }{*}{}& $7$ & $2$ &\multirow{ 13 }{*}{}\\\cline{1-3}
\hline
&  & $\Downarrow$ & \\
\end{tabular}
\end{minipage}
\hspace{.2cm}
\begin{minipage}{.15\linewidth}
\vspace{-2cm}
\centering
\begin{tabular}{||c|c|c|c||}
&  & \multirow{2}{*}{$\Downarrow$} & \\
&  & \multirow{2}{*}{} & \\
\hline
\multirow{ 3 }{*}{ $7$ }& $2$ & $6, 7$ &\multirow{ 3}{*}{\rotatebox{90}{$-68$}}\\\cline{2-3}
\multirow{ 3 }{*}{}& $6$ & $2$ &\multirow{ 3 }{*}{}\\\cline{2-3}
\multirow{ 3 }{*}{}& $7$ & $2$ &\multirow{ 3 }{*}{}\\
\hline
\multirow{ 6 }{*}{ $0$ }& $1$ & $2, 4, 5, 6, 7$ &\multirow{ 43 }{*}{ \rotatebox{90}{$-64$}}\\\cline{2-3}
\multirow{ 6 }{*}{}& $2$ & $1$ &\multirow{ 43 }{*}{}\\\cline{2-3}
\multirow{ 6 }{*}{}& $4$ & $1$ &\multirow{ 43 }{*}{}\\\cline{2-3}
\multirow{ 6 }{*}{}& $5$ & $1$ &\multirow{ 43 }{*}{}\\\cline{2-3}
\multirow{ 6 }{*}{}& $6$ & $1$ &\multirow{ 43 }{*}{}\\\cline{2-3}
\multirow{ 6 }{*}{}& $7$ & $1$ &\multirow{ 43 }{*}{}\\\cline{1-3}
\multirow{ 7 }{*}{ $1$ }& $0$ & $2, 4, 5, 6, 7$ &\multirow{ 43 }{*}{}\\\cline{2-3}
\multirow{ 7 }{*}{}& $2$ & $0, 3$ &\multirow{ 43 }{*}{}\\\cline{2-3}
\multirow{ 7 }{*}{}& $3$ & $2, 4, 5, 6, 7$ &\multirow{ 43 }{*}{}\\\cline{2-3}
\multirow{ 7 }{*}{}& $4$ & $0, 3$ &\multirow{ 43 }{*}{}\\\cline{2-3}
\multirow{ 7 }{*}{}& $5$ & $0, 3$ &\multirow{ 43 }{*}{}\\\cline{2-3}
\multirow{ 7 }{*}{}& $6$ & $0, 3$ &\multirow{ 43 }{*}{}\\\cline{2-3}
\multirow{ 7 }{*}{}& $7$ & $0, 3$ &\multirow{ 43 }{*}{}\\\cline{1-3}
\multirow{ 8 }{*}{ $2$ }& $0$ & $1$ &\multirow{ 43 }{*}{}\\\cline{2-3}
\multirow{ 8 }{*}{}& $1$ & $0, 3$ &\multirow{ 43 }{*}{}\\\cline{2-3}
\multirow{ 8 }{*}{}& $2$ & $4, 5, 6, 7$ &\multirow{ 43 }{*}{}\\\cline{2-3}
\multirow{ 8 }{*}{}& $3$ & $1$ &\multirow{ 43 }{*}{}\\\cline{2-3}
\multirow{ 8 }{*}{}& $4$ & $2$ &\multirow{ 43 }{*}{}\\\cline{2-3}
\multirow{ 8 }{*}{}& $5$ & $2$ &\multirow{ 43 }{*}{}\\\cline{2-3}
\multirow{ 8 }{*}{}& $6$ & $2$ &\multirow{ 43 }{*}{}\\\cline{2-3}
\multirow{ 8 }{*}{}& $7$ & $2$ &\multirow{ 43 }{*}{}\\\cline{1-3}
\multirow{ 6 }{*}{ $3$ }& $1$ & $2, 4, 5, 6, 7$ &\multirow{ 43 }{*}{}\\\cline{2-3}
\multirow{ 6 }{*}{}& $2$ & $1$ &\multirow{ 43 }{*}{}\\\cline{2-3}
\multirow{ 6 }{*}{}& $4$ & $1$ &\multirow{ 43 }{*}{}\\\cline{2-3}
\multirow{ 6 }{*}{}& $5$ & $1$ &\multirow{ 43 }{*}{}\\\cline{2-3}
\multirow{ 6 }{*}{}& $6$ & $1$ &\multirow{ 43 }{*}{}\\\cline{2-3}
\multirow{ 6 }{*}{}& $7$ & $1$ &\multirow{ 43 }{*}{}\\\cline{1-3}
\multirow{ 4 }{*}{ $4$ }& $0$ & $1$ &\multirow{ 43 }{*}{}\\\cline{2-3}
\multirow{ 4 }{*}{}& $1$ & $0, 3$ &\multirow{ 43 }{*}{}\\\cline{2-3}
\multirow{ 4 }{*}{}& $2$ & $2$ &\multirow{ 43 }{*}{}\\\cline{2-3}
\multirow{ 4 }{*}{}& $3$ & $1$ &\multirow{ 43 }{*}{}\\\cline{1-3}
\multirow{ 4 }{*}{ $5$ }& $0$ & $1$ &\multirow{ 43 }{*}{}\\\cline{2-3}
\multirow{ 4 }{*}{}& $1$ & $0, 3$ &\multirow{ 43 }{*}{}\\\cline{2-3}
\multirow{ 4 }{*}{}& $2$ & $2$ &\multirow{ 43 }{*}{}\\\cline{2-3}
\multirow{ 4 }{*}{}& $3$ & $1$ &\multirow{ 43 }{*}{}\\\cline{1-3}
\multirow{ 4 }{*}{ $6$ }& $0$ & $1$ &\multirow{ 43 }{*}{}\\\cline{2-3}
\multirow{ 4 }{*}{}& $1$ & $0, 3$ &\multirow{ 43 }{*}{}\\\cline{2-3}
\multirow{ 4 }{*}{}& $2$ & $2$ &\multirow{ 43 }{*}{}\\\cline{2-3}
\multirow{ 4 }{*}{}& $3$ & $1$ &\multirow{ 43 }{*}{}\\\cline{1-3}
\multirow{ 4 }{*}{ $7$ }& $0$ & $1$ &\multirow{ 43 }{*}{}\\\cline{2-3}
\multirow{ 4 }{*}{}& $1$ & $0, 3$ &\multirow{ 43 }{*}{}\\\cline{2-3}
\multirow{ 4 }{*}{}& $2$ & $2$ &\multirow{ 43 }{*}{}\\\cline{2-3}
\multirow{ 4 }{*}{}& $3$ & $1$ &\multirow{ 43 }{*}{}\\
\hline
\multirow{ 2 }{*}{$ 2$ }& $6$ & $4, 5$ &\multirow{ 8 }{*}{\rotatebox{90}{$-54$}}\\\cline{2-3}
\multirow{ 2 }{*}{}& $7$ & $4, 5$ &\multirow{ 8 }{*}{}\\\cline{1-3}
\multirow{ 1 }{*}{ $4$ }& $2$ & $6, 7$ &\multirow{ 8 }{*}{}\\\cline{1-3}
\multirow{ 1 }{*}{ $5$ }& $2$ & $6, 7$ &\multirow{ 8 }{*}{}\\\cline{1-3}
\multirow{ 2 }{*}{ $6$ }& $4$ & $2$ &\multirow{ 8 }{*}{}\\\cline{2-3}
\multirow{ 2 }{*}{}& $5$ & $2$ &\multirow{ 8 }{*}{}\\\cline{1-3}
\multirow{ 2 }{*}{ $7$ }& $4$ & $2$ &\multirow{ 8 }{*}{}\\\cline{2-3}
\multirow{ 2 }{*}{}& $5$ & $2$ &\multirow{ 8 }{*}{}\\
\hline
\multirow{ 1 }{*}{ $0$ }& $1$ & $1$ &\multirow{ 5 }{*}{\rotatebox{90}{$-48$}}\\\cline{1-3}
\multirow{ 3 }{*}{ $1$ }& $0$ & $1$ &\multirow{ 5 }{*}{}\\\cline{2-3}
\multirow{ 3 }{*}{}& $1$ & $0, 3$ &\multirow{ 5 }{*}{}\\\cline{2-3}
\multirow{ 3 }{*}{}& $3$ & $1$ &\multirow{ 5 }{*}{}\\\cline{1-3}
\multirow{ 1 }{*}{ $3$ }& $1$ & $1$ &\multirow{ 5 }{*}{}\\
\hline
\multirow{ 2 }{*}{ $4$ }& $4$ & $4, 5, 6, 7$ &\multirow{ 2 }{*}{\rotatebox{90}{$-47$}}\\\cline{2-3}
\multirow{ 2 }{*}{}& $5$ & $4, 5, 6, 7$ &\multirow{ 2 }{*}{}\\
\hline
&  & $\Downarrow$ & \\
\end{tabular}
\end{minipage}
\hspace{.2cm}
\begin{minipage}{.17\linewidth}
\vspace{-2cm}
\centering
\begin{tabular}{||c|c|c|c||}
&  & \multirow{2}{*}{$\Downarrow$} & \\
&  & \multirow{2}{*}{} & \\
\hline
\multirow{ 2 }{*}{$4$}& $6$ & $4, 5, 6, 7$ &\multirow{ 14 }{*}{\rotatebox{90}{$-47$}}\\\cline{2-3}
\multirow{ 2 }{*}{}& $7$ & $4, 5, 6, 7$ &\multirow{ 14 }{*}{}\\\cline{1-3}
\multirow{ 4 }{*}{ $5$ }& $4$ & $4, 5, 6, 7$ &\multirow{ 14 }{*}{}\\\cline{2-3}
\multirow{ 4 }{*}{}& $5$ & $4, 5, 6, 7$ &\multirow{ 14 }{*}{}\\\cline{2-3}
\multirow{ 4 }{*}{}& $6$ & $4, 5, 6, 7$ &\multirow{ 14 }{*}{}\\\cline{2-3}
\multirow{ 4 }{*}{}& $7$ & $4, 5, 6, 7$ &\multirow{ 14 }{*}{}\\\cline{1-3}
\multirow{ 4 }{*}{ $6$ }& $4$ & $4, 5, 6, 7$ &\multirow{ 14 }{*}{}\\\cline{2-3}
\multirow{ 4 }{*}{}& $5$ & $4, 5, 6, 7$ &\multirow{ 14 }{*}{}\\\cline{2-3}
\multirow{ 4 }{*}{}& $6$ & $4, 5, 6, 7$ &\multirow{ 14 }{*}{}\\\cline{2-3}
\multirow{ 4 }{*}{}& $7$ & $4, 5, 6, 7$ &\multirow{ 14 }{*}{}\\\cline{1-3}
\multirow{ 4 }{*}{ $7$ }& $4$ & $4, 5, 6, 7$ &\multirow{ 14 }{*}{}\\\cline{2-3}
\multirow{ 4 }{*}{}& $5$ & $4, 5, 6, 7$ &\multirow{ 14 }{*}{}\\\cline{2-3}
\multirow{ 4 }{*}{}& $6$ & $4, 5, 6, 7$ &\multirow{ 14 }{*}{}\\\cline{2-3}
\multirow{ 4 }{*}{}& $7$ & $4, 5, 6, 7$ &\multirow{ 14 }{*}{}\\
\hline
\multirow{ 3 }{*}{ $1$ }& $2$ & $4, 5$ &\multirow{ 10 }{*}{\rotatebox{90}{$-44$} }\\\cline{2-3}
\multirow{ 3 }{*}{}& $6$ & $2$ &\multirow{ 10 }{*}{}\\\cline{2-3}
\multirow{ 3 }{*}{}& $7$ & $2$ &\multirow{ 10 }{*}{}\\\cline{1-3}
\multirow{ 3 }{*}{ $2$ }& $1$ & $6, 7$ &\multirow{ 10 }{*}{}\\\cline{2-3}
\multirow{ 3 }{*}{}& $4$ & $1$ &\multirow{ 10 }{*}{}\\\cline{2-3}
\multirow{ 3 }{*}{}& $5$ & $1$ &\multirow{ 10 }{*}{}\\\cline{1-3}
\multirow{ 1 }{*}{ $4$ }& $1$ & $2$ &\multirow{ 10 }{*}{}\\\cline{1-3}
\multirow{ 1 }{*}{ $5$ }& $1$ & $2$ &\multirow{ 10 }{*}{}\\\cline{1-3}
\multirow{ 1 }{*}{ $6$ }& $2$ & $1$ &\multirow{ 10 }{*}{}\\\cline{1-3}
\multirow{ 1 }{*}{ $7$ }& $2$ & $1$ &\multirow{ 10 }{*}{}\\
\hline
\multirow{ 2 }{*}{ $1$ }& $6$ & $4, 5$ &\multirow{ 8 }{*}{\rotatebox{90}{$-40$}}\\\cline{2-3}
\multirow{ 2 }{*}{}& $7$ & $4, 5$ &\multirow{ 8 }{*}{}\\\cline{1-3}
\multirow{ 1 }{*}{ $4$ }& $1$ & $6, 7$ &\multirow{ 8 }{*}{}\\\cline{1-3}
\multirow{ 1 }{*}{$ 5$ }& $1$ & $6, 7$ &\multirow{ 8 }{*}{}\\\cline{1-3}
\multirow{ 2 }{*}{ $6$ }& $4$ & $1$ &\multirow{ 8 }{*}{}\\\cline{2-3}
\multirow{ 2 }{*}{}& $5$ & $1$ &\multirow{ 8 }{*}{}\\\cline{1-3}
\multirow{ 2 }{*}{ $7 $}& 4 & $1$ &\multirow{ 8 }{*}{}\\\cline{2-3}
\multirow{ 2 }{*}{}& $5$ & $1$ &\multirow{ 8 }{*}{}\\
\hline
\multirow{ 1 }{*}{ $1$ }& $2$ & $2$ &\multirow{ 3 }{*}{\rotatebox{90}{$-32$}}\\\cline{1-3}
\multirow{ 2 }{*}{ $2$ }& $1$ & $2$ &\multirow{ 3 }{*}{}\\\cline{2-3}
\multirow{ 2 }{*}{}& $2$ & $1$ &\multirow{ 3 }{*}{}\\
\hline
\multirow{ 3 }{*}{ $1$ }& $2$ & $6, 7$ &\multirow{ 10 }{*}{\rotatebox{90}{$-28$} }\\\cline{2-3}
\multirow{ 3 }{*}{}& $4$ & $2$ &\multirow{ 10 }{*}{}\\\cline{2-3}
\multirow{ 3 }{*}{}& $5$ & $2$ &\multirow{ 10 }{*}{}\\\cline{1-3}
\multirow{ 3 }{*}{ $2$ }& $1$ & $4, 5$ &\multirow{ 10 }{*}{}\\\cline{2-3}
\multirow{ 3 }{*}{}& $6$ & $1$ &\multirow{ 10 }{*}{}\\\cline{2-3}
\multirow{ 3 }{*}{}& $7$ & $1$ &\multirow{ 10 }{*}{}\\\cline{1-3}
\multirow{ 1 }{*}{ $4$ }& $2$ & $1$ &\multirow{ 10 }{*}{}\\\cline{1-3}
\multirow{ 1 }{*}{ $5$ }& $2$ & $1$ &\multirow{ 10 }{*}{}\\\cline{1-3}
\multirow{ 1 }{*}{ $6$ }& $1$ & $2$ &\multirow{ 10 }{*}{}\\\cline{1-3}
\multirow{ 1 }{*}{ $7$ }& $1$ & $2$ &\multirow{ 10 }{*}{}\\
\hline
\multirow{ 4 }{*}{ $1$ }& $4$ & $4, 5$ &\multirow{ 16 }{*}{\rotatebox{90}{$-26$}}\\\cline{2-3}
\multirow{ 4 }{*}{}& $5$ & $4, 5$ &\multirow{ 16 }{*}{}\\\cline{2-3}
\multirow{ 4 }{*}{}& $6$ & $6, 7$ &\multirow{ 16 }{*}{}\\\cline{2-3}
\multirow{ 4 }{*}{}& $7$ & $6, 7$ &\multirow{ 16 }{*}{}\\\cline{1-3}
\multirow{ 3 }{*}{ $4$ }& $1$ & $4, 5$ &\multirow{ 16 }{*}{}\\\cline{2-3}
\multirow{ 3 }{*}{}& $4$ & $1$ &\multirow{ 16 }{*}{}\\\cline{2-3}
\multirow{ 3 }{*}{}& $5$ & $1$ &\multirow{ 16 }{*}{}\\\cline{1-3}
\multirow{ 3 }{*}{ $5$ }& $1$ & $4, 5$ &\multirow{ 16 }{*}{}\\\cline{2-3}
\multirow{ 3 }{*}{}& $4$ & $1$ &\multirow{ 16 }{*}{}\\\cline{2-3}
\multirow{ 3 }{*}{}& $5$ & $1$ &\multirow{ 16 }{*}{}\\\cline{1-3}
\multirow{ 3 }{*}{ $6$ }& $1$ & $6, 7$ &\multirow{ 16 }{*}{}\\\cline{2-3}
\multirow{ 3 }{*}{}& $6$ & $1$ &\multirow{ 16 }{*}{}\\\cline{2-3}
\multirow{ 3 }{*}{}& $7$ & $1$ &\multirow{ 16 }{*}{}\\\cline{1-3}
\multirow{ 3 }{*}{ $7$ }& $1$ & $6, 7$ &\multirow{ 16 }{*}{}\\\cline{2-3}
\multirow{ 3 }{*}{}& $6$ & $1$ &\multirow{ 16 }{*}{}\\\cline{2-3}
\multirow{ 3 }{*}{}& $7$ & $1$ &\multirow{ 16 }{*}{}\\
\hline
&  & $\Downarrow$ & \\
\end{tabular}
\end{minipage}
\hspace{.2cm}
\vspace{1cm}
\begin{minipage}{.17\linewidth}
\vspace{2cm}
\centering
\begin{tabular}{||c|c|c|c||}
&  & \multirow{2}{*}{$\Downarrow$} & \\
&  & \multirow{2}{*}{} & \\
\hline
\multirow{ 2 }{*}{ $1$ }& $4$ & $6, 7$ &\multirow{ 8 }{*}{\rotatebox{90}{$-18$}}\\\cline{2-3}
\multirow{ 2 }{*}{}& $5$ & $6, 7$ &\multirow{ 8 }{*}{}\\\cline{1-3}
\multirow{ 2 }{*}{ $4$ }& $6$ & $1$ &\multirow{ 8 }{*}{}\\\cline{2-3}
\multirow{ 2 }{*}{}& $7$ & $1$ &\multirow{ 8 }{*}{}\\\cline{1-3}
\multirow{ 2 }{*}{ $5$ }& $6$ & $1$ &\multirow{ 8 }{*}{}\\\cline{2-3}
\multirow{ 2 }{*}{}& $7$ & $1$ &\multirow{ 8 }{*}{}\\\cline{1-3}
\multirow{ 1 }{*}{ $6$ }& $1$ & $4, 5$ &\multirow{ 8 }{*}{}\\\cline{1-3}
\multirow{ 1 }{*}{ $7$ }& $1$ & $4, 5$ &\multirow{ 8 }{*}{}\\
\hline
\multirow{ 6 }{*}{ $1$ }& $1$ & $2, 4, 5, 6, 7$ &\multirow{ 11 }{*}{\rotatebox{90}{$-8$}}\\\cline{2-3}
\multirow{ 6 }{*}{}& $2$ & $1$ &\multirow{ 11 }{*}{}\\\cline{2-3}
\multirow{ 6 }{*}{}& $4$ & $1$ &\multirow{ 11 }{*}{}\\\cline{2-3}
\multirow{ 6 }{*}{}& $5$ & $1$ &\multirow{ 11 }{*}{}\\\cline{2-3}
\multirow{ 6 }{*}{}& $6$ & $1$ &\multirow{ 11 }{*}{}\\\cline{2-3}
\multirow{ 6 }{*}{}& $7$ & $1$ &\multirow{ 11 }{*}{}\\\cline{1-3}
\multirow{ 1 }{*}{ $2$ }& $1$ & $1$ &\multirow{ 11 }{*}{}\\\cline{1-3}
\multirow{ 1 }{*}{ $4$ }& $1$ & $1$ &\multirow{ 11 }{*}{}\\\cline{1-3}
\multirow{ 1 }{*}{ $5$ }& $1$ & $1$ &\multirow{ 11 }{*}{}\\\cline{1-3}
\multirow{ 1 }{*}{ $6$ }& $1$ & $1$ &\multirow{ 11 }{*}{}\\\cline{1-3}
\multirow{ 1 }{*}{ $7$ }& $1$ & $1$ &\multirow{ 11 }{*}{}\\
\hline
\multirow{ 1 }{*}{$\mathbf{1}$}& $\mathbf{1}$ & $\mathbf{1}$ &\multirow{ 1 }{*}{$\mathbf{8}$}\\
\hline
\end{tabular}
\caption{Optimal Total Payoff (TP) are listed for all the $8^3$ encoding strategies $(\mathrm{e}^i,\mathrm{e}^j,\mathrm{e}^k)\equiv(i,j,k)$, for $i,j,k\in\{0,1,\cdots,7\}$. The optimal classical payoff turns out to be $\mathbf{8}$, achieved for the encoding $(\mathrm{e}^1,\mathrm{e}^1,\mathrm{e}^1)\equiv(1,1,1)$. }
\label{paylist}
\vspace{10cm}
\end{minipage}
\hspace{.2cm}
\end{scriptsize}
\end{table}

\newpage
\begin{acknowledgements}
SGN acknowledges support from the CSIR project 09/0575(15951)/2022-EMR-I. MA and MB acknowledge funding from the National Mission in Interdisciplinary Cyber-Physical systems from the Department of Science and Technology through the I-HUB Quantum Technology Foundation (Grant no: I-HUB/PDF/2021-22/008). MB acknowledges support through the research grant of INSPIRE Faculty fellowship from the Department of Science and Technology, Government of India. EPL acknowledges funding from the QuantERA II Programme that has received funding from the European Union’s Horizon 2020 research and innovation programme under Grant Agreement No 101017733 and the F.R.S-FNRS Pint-Multi programme under Grant Agreement R.8014.21, from the European Union’s Horizon Europe research and innovation programme under the project ``Quantum Security Networks Partnership'' (QSNP, grant agreement No 101114043), from the F.R.S-FNRS through the PDR T.0171.22, from the FWO and F.R.S.-FNRS under the Excellence of Science (EOS) programme project 40007526, from the FWO through the BeQuNet SBO project S008323N, from the Belgian Federal Science Policy through the contract RT/22/BE-QCI and the EU ``BE-QCI'' program. EPL is funded by the European Union. Views and opinions expressed are however those of the author only and do not necessarily reflect those of the European Union. The European Union cannot be held responsible for them. EPL is grateful to Stefano Pironio for discussions on numerical methods.
\end{acknowledgements}


\end{document}